\documentclass[journal]{IEEEtran}

\usepackage{algorithm}
\usepackage{algorithmic}
\usepackage{amsmath,amssymb}
\usepackage{graphicx}
\usepackage{cite}
\newcommand{\bm}[1]{\mbox{\boldmath{$#1$}}}
\usepackage{stfloats}
\usepackage{hyperref}
\usepackage{amsthm}
\usepackage{subfigure}

\newcommand{\headcell}[1]{%
	\begin{tabular}[c]{@{}c@{}}#1\end{tabular}}

\newcommand{\headcellleft}[1]{%
	\begin{tabular}[c]{@{}l@{}}#1\end{tabular}}

\allowdisplaybreaks[4]
\newcommand{\C}{\mathbb{C}}
\newcommand{\R}{\mathbb{R}}
\newcommand{\tr}{\operatorname{Tr}}

\theoremstyle{plain}
\newtheorem{proposition}{Proposition}
\newtheorem{remark}{Remark}

\begin{document}

\title{Rotatable Antenna Relaying: Joint Precoding and Antenna Pointing Design}
\author{Ying~Gao, Qingqing~Wu, and Wen~Chen \vspace{-2mm} 
\thanks{The authors are with the School of Integrated Circuit, Shanghai Jiao Tong University, Shanghai 201210, China (e-mail: yinggao@sjtu.edu.cn; qingqingwu@sjtu.edu.cn; wenchen@sjtu.edu.cn).}}

\maketitle

\begin{abstract}
	This paper investigates a rotatable antenna (RA)-enhanced half-duplex amplify-and-forward relaying system, where a multi-antenna base station (BS) serves multiple single-antenna users via a multi-antenna relay. The BS and users employ isotropic antennas, whereas the relay employs directional RAs whose pointing matrix is shared by both hops to avoid inter-hop reorientation delay and control overhead. We aim to maximize the minimum signal-to-interference-plus-noise ratio (SINR) among all users by jointly optimizing the BS precoding, relay precoding, and RA pointing matrices. To tackle this non-convex problem, we first investigate the single-user scenario and reduce the joint design to RA pointing optimization using the optimal relay precoding matrix available in closed form. A manifold-aware Frank--Wolfe (MFW) method is then employed to obtain a suboptimal pointing solution. Under a symmetric far-field line-of-sight geometry, we further characterize the globally optimal pointing structure and derive a directivity threshold separating common pointing from antenna splitting. Building on this MFW procedure, we next address the general multiuser scenario. Specifically, the quadratic transform is first applied to obtain an equivalent auxiliary-variable formulation, which is subsequently solved suboptimally via alternating optimization (AO). In particular, we employ a safeguarded extension of the MFW method based on log-sum-exp smoothing to update the RA pointing matrix in each AO iteration. Simulation results demonstrate that the proposed algorithms consistently achieve the best signal-to-noise ratio (SNR) and SINR performance among all considered schemes. It is further shown that the optimized RA pointing adapts to the two-hop geometry, the preferred directivity factor depends on user load, and relay placement with relatively balanced two-hop propagation conditions is generally preferable. 
\end{abstract}

\begin{IEEEkeywords}
	Rotatable antenna, half-duplex amplify-and-forward relaying, precoding, antenna pointing design, max-min fairness. 
\end{IEEEkeywords}

\vspace{-2mm}
\section{Introduction}
Wireless relaying is an effective approach to extending coverage and improving transmission reliability in infrastructure-based and ad hoc networks \cite{2004_Pabst_relay}. When the direct link is severely degraded by path loss, shadowing, or multipath fading, an intermediate relay can assist communication between the source and destination through two-hop transmission. Relay protocols are broadly classified as regenerative or non-regenerative \cite{2004_Laneman_relay}. Decode-and-forward (DF) is a representative regenerative protocol, in which the relay decodes and re-encodes the received signal before forwarding it. By contrast, amplify-and-forward (AF) is a common non-regenerative protocol that directly amplifies and forwards the received signal, thereby reducing processing delay and implementation complexity. Relays may also operate in full-duplex or half-duplex mode. Full-duplex operation enables simultaneous transmission and reception and can improve spectral efficiency, but imposes stringent self-interference cancellation and hardware requirements \cite{2014_Sabharwal_FD}. Half-duplex operation instead separates reception and transmission over orthogonal time or frequency resources. Consequently, half-duplex AF relaying combines low processing complexity with ease of implementation. 

The performance of relay networks can be further enhanced by multiple-input multiple-output (MIMO) technology, which exploits spatial degrees of freedom (DoFs) to increase spectral efficiency and mitigate channel fading \cite{2004_PAULRAJ_MIMO}. In multiuser relay systems, equipping the base station (BS) and relay with multiple antennas enables spatial multiplexing of multiple user streams and facilitates interference suppression through coordinated processing across the two hops. Accordingly, existing studies have analyzed the achievable performance of MIMO relay systems and developed various transceiver designs based on linear precoding and power allocation \cite{2008_Chae_relay,2009_Rui_relay,2009_Yue_relay}.  However, conventional MIMO implementations generally require large fixed-position antenna (FPA) arrays, resulting in increased hardware cost, power consumption, and signal processing complexity \cite{2020_Chowdhury_challenges}. Moreover, the fixed positions and orientations of FPAs restrict spatial channel sampling to discrete locations, preventing antenna arrays from adapting to continuous channel variations and thereby limiting the attainable diversity and multiplexing gains \cite{2023_Lipeng_overview}.

To overcome the limitations of FPAs, movable antennas (MAs) \cite{2023_Lipeng_Modeling}, also known as fluid antennas \cite{2021_Wong_fluid}, have emerged as a promising technology that enables flexible antenna positioning within continuous spatial regions. This positional flexibility provides additional DoFs, allowing MAs to exploit local channel variations to improve spectral efficiency, mitigate interference with fewer radio-frequency chains, and achieve full array gain with null steering \cite{2023_Wenyan_MIMO,2023_Lipeng_uplink,2023_Lipeng_null}. These advantages have motivated extensive research across various wireless systems \cite{2024_Zhenyu_uplink,2024_Ying_multicast,2025_Ziyuan_twotime,2024_Honghao_IFC,2025_Yichi_hybrid,2026_Ying_WPCN,2026_Ying_cost}. In particular, several studies have investigated MA-enhanced relay communications \cite{2025_Nianzu_relay,2025_Shihao_realy,2024_Ruopeng_relay}. For example, the authors of \cite{2025_Nianzu_relay} studied half-duplex DF and AF relaying between two single-antenna terminals, jointly optimizing the stage-dependent relay-side MA positions and, in the AF case, the relay beamforming matrix to maximize the achievable rate. Furthermore, a half-duplex AF two-way relay system was investigated in \cite{2025_Shihao_realy}, where the relay-side MA positions and beamforming matrix were jointly optimized to maximize the minimum signal-to-noise ratio (SNR) over the two transmission directions. 

Although MAs provide positional flexibility, their fixed orientations limit antenna reconfiguration to positional adjustments. To exploit both position and orientation DoFs, six-dimensional movable antennas (6DMAs) have been proposed to jointly adjust antenna positions and three-dimensional orientations \cite{2025_Xiaodan_6DMA,2026_Xiaodan_6DMA_survey}. As a lightweight and implementation-friendly variant of 6DMAs, rotatable antennas (RAs) maintain fixed positions while allowing the three-dimensional boresight of each directional antenna to be independently adjusted \cite{2026_Beixiong_RA,2026_Beixiong_RA_mag}. Such orientation control changes the antenna gains along different propagation paths, thereby enhancing desired signals and mitigating interference without translating the antenna elements. Compared with position-adjustable architectures, RAs avoid the additional movement regions and sliding mechanisms required for antenna translation, reducing mechanical complexity and facilitating integration with existing planar arrays. These advantages have motivated extensive studies on RA-enabled systems for multiuser communications \cite{2026_Guoying_RA,2026_Ailing_RA,2026_Xingxiang_RA_beyong}, spectrum sharing \cite{2026_Xingxiang_RA_spectrum,2026_Yanhua_RA}, cell-free communications \cite{2026_Kecheng_RA,2026_Xingxiang_RA_cell}, physical-layer security \cite{2025_Liang_RA_security}, and integrated sensing and communications \cite{2025_Chao_RA_ISAC,2025_Yunan_RA_ISAC,2026_Shiying_RA_ISAC}. In these studies, RA orientations are commonly optimized jointly with transmit or receive beamforming and other system variables. 

Despite these advances, research on RA-enhanced relay communications remains in its infancy. In half-duplex relaying, a key design choice is whether the relay-side RAs should be reconfigured between the two hops or retain the same orientations throughout a transmission block. Although per-hop reconfiguration offers greater flexibility, it may incur additional adjustment delay, energy consumption, and control overhead. Maintaining a shared RA pointing matrix across both hops therefore provides a practical low-overhead alternative. This architecture raises a fundamental question: \emph{How should the relay-side RAs be pointed to balance first-hop reception from the BS and second-hop transmission to spatially distributed users?} This question stems from the fact that strengthening the RA gain toward the BS may weaken the gains toward the users, and vice versa, while unequal user path losses further complicate this balance. The trade-off becomes more pronounced as antenna directivity increases. Answering this question is particularly challenging in AF relay systems, which are attractive for their low processing latency and implementation complexity. Unlike DF relaying, whose decoding operation separates the two-hop signal models, AF relaying forwards a linearly processed mixture of signals, interference, and relay noise. Consequently, the shared RA pointing matrix simultaneously affects both hop channels and the relay transmit power, while coupling with the BS precoding and relay precoding matrices in the fractional end-to-end signal-to-interference-plus-noise ratios (SINRs). These intertwined effects lead to a tightly coupled non-convex joint design problem and call for dedicated analytical and algorithmic treatment. 

\begin{figure}[!t]
	\centering
	\includegraphics[scale=0.48]{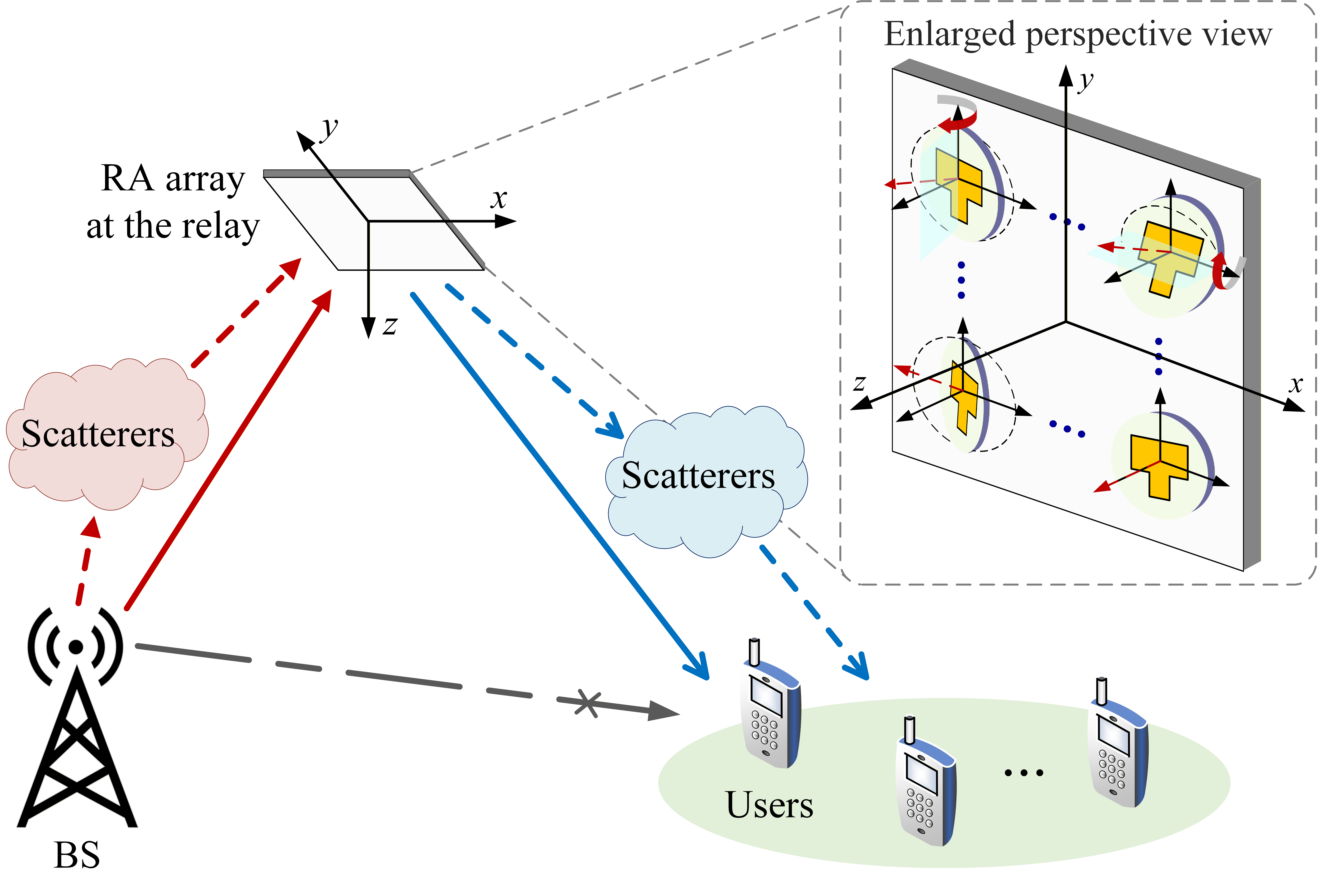}
	\caption{Illustration of an RA-enhanced multiuser relay system.}
	\label{fig:system_model}
	\vspace{-1mm}
\end{figure}

Motivated by the above discussions, we investigate an RA-enhanced half-duplex AF relaying system. As shown in Fig.~\ref{fig:system_model}, a multi-antenna relay assists a multi-antenna BS in serving multiple single-antenna users. The BS and users employ isotropic antennas, whereas the relay employs directional RAs. Considering the reorientation delay and control overhead, the relay-side RA pointing matrix is optimized once per block and shared by both hops. For user fairness, we jointly optimize the BS precoding, relay precoding, and RA pointing matrices to maximize the minimum SINR among all users, resulting in a non-convex problem. The proposed solutions and the associated findings constitute the main contributions of this paper, summarized as follows:
\begin{itemize}
	\item To lay the algorithmic foundation for the general multiuser scenario, we first study the simplified single-user case with a single-antenna BS. Leveraging the closed-form characterization of the optimal relay precoding matrix, we reduce the joint design to an RA pointing problem and obtain a suboptimal solution using a manifold-aware Frank--Wolfe (MFW) method. For the special case of symmetric far-field line-of-sight (LoS) geometry, we further characterize the globally optimal pointing structure and derive a directivity threshold separating common pointing from antenna splitting. 
	
	\item Building on the single-user MFW procedure, we next address the general multiuser scenario, where inter-user interference couples all design variables. Although alternating optimization (AO) is a natural choice for handling this coupling, applying it directly would retain fractional SINR constraints that complicate blockwise optimization. We therefore apply the quadratic transform to recast the problem into an equivalent auxiliary-variable formulation amenable to AO. Within each AO iteration, the RA pointing matrix is updated using a safeguarded MFW method based on log-sum-exp smoothing, while the BS precoding and relay precoding matrices are updated by solving convex subproblems.  
	
	\item Simulation results show that: 1) The proposed algorithms consistently achieve the best SNR/SINR performance among all considered schemes across the evaluated settings. 2) The transition from common pointing to antenna splitting characterized under symmetric far-field LoS geometry is also observed in asymmetric multipath channels, where the preferred split is unbalanced. In multiuser scenarios, the optimized RAs increasingly form distinct BS- and user-oriented groups as the directivity factor increases. 3) A higher directivity factor is favored under light user loads, whereas a moderate value better accommodates dispersed users under heavy loads. 4) A relay location that balances the propagation conditions of the two hops is generally preferable. 
\end{itemize} 

The remainder of this paper is organized as follows. Section~\ref{sec:system_model} presents the system model and formulates the max-min SINR problem. Section~\ref{sec:single_user} investigates the single-user scenario, developing an MFW algorithm and characterizing the optimal RA pointing structure under a symmetric far-field LoS geometry. Section~\ref{sec:multiuser} tackles the general multiuser scenario using an AO algorithm based on the quadratic transform, where a safeguarded MFW method is adapted to update the RA pointing matrix. Section~\ref{sec:simulation} provides numerical results, and Section~\ref{sec:conclusion} concludes the paper.

\emph{Notations:} Boldface lowercase and uppercase letters denote vectors and matrices, respectively. $\mathbb R$, $\mathbb C$, and $\mathbb C^{M\times N}$ denote the sets of real scalars, complex scalars, and $M\times N$ complex matrices, respectively. $\bm I_N$ denotes the $N\times N$ identity matrix. For a complex scalar $x$, $|x|$ and $x^*$ denote its magnitude and complex conjugate, respectively. For a real scalar $x$, $[x]_+\triangleq\max\{0,x\}$. The Euclidean norm of a vector $\bm a$ is denoted by $\|\bm a\|$. For a matrix $\bm A$, $\|\bm A\|_F$, $\operatorname{Tr}(\bm A)$, and $[\bm A]_{i,j}$ denote its Frobenius norm, trace, and $(i,j)$-th entry, respectively. The operators $(\cdot)^T$, $(\cdot)^H$, $\operatorname{Re}\{\cdot\}$, $\operatorname{vec}(\cdot)$, and $\mathbb E\{\cdot\}$ denote transpose, conjugate transpose, real part, vectorization, and expectation, respectively. $\otimes$ denotes the Kronecker product. $\mathcal{CN}(\bm x,\bm\Sigma)$ denotes a circularly symmetric complex Gaussian distribution with mean $\bm x$ and covariance $\bm\Sigma$. Finally, $\jmath\triangleq\sqrt{-1}$ denotes the imaginary unit.

\section{System Model and Problem Formulation}
\label{sec:system_model}

\subsection{System Model}

As shown in Fig.~\ref{fig:system_model}, we consider a narrowband quasi-static half-duplex AF relaying system consisting of an $M_{\mathrm B}$-antenna BS, an $N$-antenna relay, and $K$ single-antenna users. The BS and users employ isotropic antennas, whereas the relay is equipped with a uniform planar array (UPA) of directional RAs. The sets of RAs and users are denoted by $\mathcal N\triangleq\{1,\ldots,N\}$ and $\mathcal K\triangleq\{1,\ldots,K\}$, respectively. To focus on RA-enabled two-hop transmission, we assume a relay-dominant setting in which the direct BS--user links are either blocked or sufficiently weak to be neglected. Under half-duplex operation, the relay receives the BS signal in Stage~I and amplifies and forwards it to the users in Stage~II. The boresight directions of the relay RAs are configured once per block and held fixed during both stages, thereby avoiding inter-stage reconfiguration overhead.

Without loss of generality, the relay-side UPA is assumed to lie on the $x$--$y$ plane and be centered at the origin of a three-dimensional Cartesian coordinate system. It comprises $N_x$ and $N_y$ RAs along the $x$- and $y$-axes, respectively, where $N=N_xN_y$. Let $\Delta$ denote the spacing between adjacent RAs along either axis. Using row-major indexing, the RA in the $i$-th column and $j$-th row is assigned the index $n=(j-1)N_x+i\in\mathcal N$, where $i\in\{1,\ldots,N_x\}$ and $j\in\{1,\ldots,N_y\}$. Its position vector relative to the center of the UPA can be expressed as 
\begin{equation}
	\bm w_n= \left[ \left(i-\frac{N_x+1}{2}\right)\Delta,
	\left(j-\frac{N_y+1}{2}\right)\Delta, 0\right] ^{T}.
	\label{eq:ra_position}
\end{equation}
While RA $n$ is fixed at $\bm w_n$, its boresight direction is adjustable and can be characterized by a pointing vector given by
\begin{equation}
\bm f_n=
\left[ \sin\theta_{\mathrm z,n}\cos\theta_{\mathrm a,n}, 
\sin\theta_{\mathrm z,n}\sin\theta_{\mathrm a,n}, 
\cos\theta_{\mathrm z,n}\right]^{T},
\label{eq:pointing_vector}
\end{equation}
where $\theta_{\mathrm z,n}$ denotes the zenith angle measured from the positive $z$-axis, and $\theta_{\mathrm a,n}$ denotes the azimuth angle measured in the $x$--$y$ plane from the positive $x$-axis toward the positive $y$-axis. By construction, each boresight vector has unit norm, i.e., $\left\| \bm f_n\right\| =1$. To reflect practical rotation limits, we constrain the zenith angle of each RA as 
\begin{equation}
	0\leq\theta_{\mathrm z,n}\leq\theta_{\max},\ \forall n\in\mathcal N,
	\label{eq:zenith_constraint}
\end{equation}
where $\theta_{\max}\in[0,\pi/2]$ denotes the maximum allowable zenith angle. Since $\bm f_n^{T}\bm e_3=\cos\theta_{\mathrm z,n}$, this zenith-angle
constraint is equivalently expressed as 
\begin{equation}
	\bm f_n^{T}\bm e_3\geq\cos\theta_{\max},\ \forall n\in\mathcal N,
\end{equation}
where $\bm e_3=[0,0,1]^T$ is the unit vector along the positive $z$-axis. The gain of each directional RA depends on the angular offset $\epsilon$ between the incident signal direction and its current boresight, and is modeled by the cosine pattern \cite{2016_Balanis_antenna}: 
\begin{equation}
G_{\mathrm e}(\epsilon)=
\begin{cases}
G_0\cos^{2p}(\epsilon), & 0\leq \epsilon<\pi/2,\\
0, & \text{otherwise},
\end{cases}
\label{eq:gain_pattern}
\end{equation}
where $G_0=2(2p+1)$ specifies the boresight gain, and $p > 0$ is the directivity factor controlling the main-lobe width. 

\subsubsection{BS-to-relay channel}
In Stage~I, the BS-to-relay channel is denoted by
$\bm H_1(\bm F)\in\C^{N\times M_{\mathrm B}}$, where $\bm F\triangleq[\bm f_1,\ldots,\bm f_N]$ is the pointing matrix of all RAs. For the LoS component between the $m$-th BS antenna and RA $n$, define the propagation distance and arrival direction as $r_{1,m,n}=\|\bm u_{\mathrm B,m}-\bm w_n\|$ and $\bm u_{1,m,n} =\frac{\bm u_{\mathrm B,m}-\bm w_n}{r_{1,m,n}}$, 
respectively, where $\bm u_{\mathrm B,m}$ denotes the position of the $m$-th BS antenna. Based on the RA gain pattern in
\eqref{eq:gain_pattern} and the Friis transmission equation \cite{1946_Friis_formula}, the corresponding link power gain is modeled as
\begin{equation}
	g_{1,n,m}^{(0)}(\bm f_n)
	=\beta_0G_0 r_{1,m,n}^{-2}
\left[ \bm f_n^T\bm u_{1,m,n}\right]_+^{2p},
	\label{eq:g1_los}
\end{equation}
where $\beta_0=(\lambda/4\pi)^2$ with $\lambda$ denoting the wavelength. The LoS channel coefficient is thus expressed as
\begin{equation} 
	h_{1,n,m}^{(0)}(\bm f_n)
	=\sqrt{g_{1,n,m}^{(0)}(\bm f_n)}
	e^{-\jmath\frac{2\pi}{\lambda}r_{1,m,n}}.
	\label{eq:h1_los}
\end{equation}
In addition to the LoS component, the BS-to-relay channel contains $Q_1$ single-scattering components. Let $\bm c_{1,q}$ denote the position of the $q$-th scatterer. The BS--scatterer and scatterer--RA distances are $t_{1,m,q}=\|\bm u_{\mathrm B,m}-\bm c_{1,q}\|$ and $d_{1,q,n}=\|\bm c_{1,q}-\bm w_n\|$, respectively, while $\bm c_{1,q,n}=(\bm c_{1,q}-\bm w_n)/d_{1,q,n}$ denotes the arrival direction at RA $n$. The power gain from scatterer $q$ to RA $n$ is modeled as
\begin{equation}
	g_{1,q,n}^{(\mathrm s)}(\bm f_n)
	=\beta_0G_0d_{1,q,n}^{-2}
	\left[ \bm f_n^T\bm c_{1,q,n}\right]_+^{2p}.
	\label{eq:g1_nlos}
\end{equation}
Accordingly, the channel coefficient from the $m$-th BS antenna to RA $n$ via scatterer $q$ is given by
\begin{equation}
	h_{1,n,m,q}^{(\mathrm s)}(\bm f_n)
	=\frac{\sqrt{\sigma_{1,q}
			g_{1,q,n}^{(\mathrm s)}(\bm f_n)}}{t_{1,m,q}}
	e^{-\jmath\frac{2\pi}{\lambda}(t_{1,m,q}+d_{1,q,n})
		+\jmath\chi_{1,q}},
	\label{eq:h1_nlos}
\end{equation}
where $\sigma_{1,q}$ and $\chi_{1,q}$ denote the scattering strength and phase shift of scatterer $q$, respectively. Combining the LoS component and the $Q_1$ non-line-of-sight (NLoS) components yields
\begin{equation}
	[\bm H_1(\bm F)]_{n,m}
	=h_{1,n,m}^{(0)}(\bm f_n)
	+\sum_{q=1}^{Q_1}
	h_{1,n,m,q}^{(\mathrm s)}(\bm f_n).
	\label{eq:h1}
\end{equation}

\subsubsection{Relay-to-user channel}
In Stage~II, the relay-to-user channel is denoted by $\bm H_2(\bm F)\in\C^{K\times N}$. Let $\bm u_k$ denote the position of user $k \in \mathcal K$. The LoS distance from RA $n$ to user $k$ is $r_{2,k,n}=\|\bm u_k-\bm w_n\|$, and the corresponding departure direction is $\bm u_{2,k,n}=(\bm u_k-\bm w_n)/r_{2,k,n}$. The NLoS channel of user $k$ is characterized by $Q_2$ user-specific scatterers located at $\{\bm c_{2,k,q}\}_{q=1}^{Q_2}$. For the $q$-th such scatterer, the RA--scatterer and scatterer--user distances are $d_{2,k,q,n}=\|\bm c_{2,k,q}-\bm w_n\|$ and $t_{2,k,q}=\|\bm u_k-\bm c_{2,k,q}\|$, respectively. The corresponding departure direction at RA $n$ is $\bm c_{2,k,q,n}=(\bm c_{2,k,q}-\bm w_n)/d_{2,k,q,n}$. Following the channel model in Stage~I, the $(k,n)$-th element of $\bm H_2(\bm F)$ is given by
\begin{align}
	[\bm H_2(\bm F)]_{k,n}
	= & \ \frac{\lambda\sqrt{G_0}}{4\pi r_{2,k,n}}
	\left[ \bm f_n^T\bm u_{2,k,n}\right]_+^p
	e^{-\jmath\frac{2\pi}{\lambda}r_{2,k,n}}
	\nonumber\\
	&+\sum_{q=1}^{Q_2}
	\frac{\lambda\sqrt{G_0\sigma_{2,k,q}}}
	{4\pi d_{2,k,q,n}t_{2,k,q}}
\left[ \bm f_n^T\bm c_{2,k,q,n}\right]_+^p \nonumber\\
	&\times e^{-\jmath\frac{2\pi}{\lambda}
		(d_{2,k,q,n}+t_{2,k,q})+\jmath\chi_{2,k,q}},
	\label{eq:h2}
\end{align}
where $\sigma_{2,k,q}$ and $\chi_{2,k,q}$ denote the scattering strength and phase shift of the $q$-th scatterer associated with user $k$, respectively. Let $\bm h_{2,k}^{H}(\bm F)$ denote the $k$-th row of
$\bm H_2(\bm F)$. For notational simplicity, the dependence of $\bm H_1$ and $\bm h_{2,k}$ on $\bm F$ is omitted whenever no ambiguity arises.

The BS employs space-division multiple access. Let $s_k\sim\mathcal{CN}(0,1)$ denote the information symbol intended for user $k$, where $\{s_k\}_{k=1}^{K}$ are mutually independent, and let $\bm w_{\mathrm B,k}\in\C^{M_{\mathrm B}}$ denote the corresponding
beamforming vector. Define $\bm W_{\mathrm B}\triangleq
[\bm w_{\mathrm B,1},\ldots,\bm w_{\mathrm B,K}]$ as the BS precoding matrix. In Stage~I, the transmitted signal from the BS can be expressed as $\bm x_{\mathrm B}=\sum_{k=1}^{K}\bm w_{\mathrm B,k}s_k$. The signal received at the relay is then given by
\begin{equation}
	\bm y_{\mathrm R}
	=\bm H_1\bm x_{\mathrm B}+\bm n_{\mathrm R},
	\label{eq:relay_received}
\end{equation}
where $\bm n_{\mathrm R}\sim\mathcal{CN}(\bm 0,\sigma_{\mathrm R}^2\bm I_N)$ is the receiver noise vector at the relay, and $\sigma_{\mathrm R}^2$ denotes the noise power. In Stage~II, the relay performs linear AF processing and transmits the resulting signal $\bm x_{\mathrm R}=\bm W_{\mathrm R}\bm y_{\mathrm R}$, where $\bm W_{\mathrm R}\in\C^{N\times N}$ is the relay precoding matrix. The transmit power of the relay is given by \looseness=-1
\begin{equation}
	P_{\mathrm R}
	=\mathbb E\!\left\{\bm x_{\mathrm R}^{H}\bm x_{\mathrm R}\right\}
	=\tr\!\left(\bm W_{\mathrm R}\bm R_{\mathrm{in}}
	\bm W_{\mathrm R}^{H}\right),
	\label{eq:relay_power}
\end{equation}
where $\bm R_{\mathrm{in}} \triangleq\mathbb E\!\left\{\bm y_{\mathrm R} \bm y_{\mathrm R}^{H}\right\} =\bm H_1\bm W_{\mathrm B}\bm W_{\mathrm B}^{H}\bm H_1^{H} +\sigma_{\mathrm R}^2\bm I_N$.
The received signal at user $k$ is expressed as
\begin{equation}
	y_{\mathrm U,k} =\sum_{j=1}^{K}\bm h_{2,k}^{H}\bm W_{\mathrm R}\bm H_1\bm w_{\mathrm B,j}s_j+\bm h_{2,k}^{H}\bm W_{\mathrm R}\bm n_{\mathrm R}+n_k,
	\label{eq:user_signal}
\end{equation}
where $n_k\sim\mathcal{CN}(0,\sigma_{\mathrm U,k}^2)$ is the receiver noise at user $k$, and $\sigma_{\mathrm U,k}^2$ denotes its power. Accordingly, the SINR at user $k$ can be written as
\begin{equation}
	\mathrm{SINR}_k=\frac{|A_k|^2}{B_k},
	\label{eq:sinr_compact}
\end{equation}
where $A_k\triangleq \bm h_{2,k}^{H}\bm W_{\mathrm R}\bm H_1\bm w_{\mathrm B,k}$ and $B_k\triangleq \sum_{j\ne k}\left|\bm h_{2,k}^{H}\bm W_{\mathrm R}\bm H_1\bm w_{\mathrm B,j}\right|^2+\sigma_{\mathrm R}^2\left\|\bm h_{2,k}^{H}\bm W_{\mathrm R}\right\|^2+\sigma_{\mathrm U,k}^2$. 

\subsection{Problem Formulation}
In this paper, our objective is to maximize the minimum SINR among all the users, defined as $\gamma \triangleq \min_{k\in\mathcal K}\mathrm{SINR}_k$, by jointly optimizing the BS precoding, relay precoding, and RA pointing matrices. The problem can be formulated as 
\begin{subequations}\label{prob:main}
	\begin{eqnarray}
    &\text{(P1)}:&
    \max_{\boldsymbol W_{\mathrm B},\boldsymbol W_{\mathrm R},\boldsymbol F,\gamma} \quad \gamma \label{prob:main_obj}\\
    &\text{s.t.}& \hspace{-3mm}\mathrm{SINR}_k\geq\gamma,\ \forall k \in \mathcal K,\label{prob:main_sinr}\\
    && \hspace{-3mm} \sum_{k=1}^{K}\|\bm w_{\mathrm B,k}\|^2 \leq P_{\mathrm B,\max},\label{prob:main_pb}\\
    && \hspace{-3mm} \tr\!\left(\bm W_{\mathrm R}\bm R_{\mathrm{in}}
    \bm W_{\mathrm R}^{H}\right)
    \leq P_{\mathrm R,\max},\label{prob:main_pr}\\
    && \hspace{-3mm} \bm f_n^T\bm e_3\geq\cos\theta_{\max},\ \forall n \in \mathcal N,\label{prob:main_cap}\\
    && \hspace{-3mm} \|\bm f_n\|=1,\ \forall n \in \mathcal N, \label{prob:main_norm}
	\end{eqnarray}
\end{subequations}
where $P_{\mathrm B,\max}$ and $P_{\mathrm R,\max}$ denote the maximum transmit powers of the BS and relay, respectively. 
Since $\boldsymbol W_{\mathrm B}$, $\boldsymbol W_{\mathrm R}$, and $\boldsymbol F$ are multiplicatively coupled in \eqref{prob:main_sinr} and \eqref{prob:main_pr}, and the unit-norm constraints in \eqref{prob:main_norm} render the feasible set of $\boldsymbol F$ non-convex, problem (P1) is non-convex and difficult to solve globally. To facilitate the solution of (P1), we first investigate a simplified
setting comprising a single-antenna BS and a single user, laying the groundwork for the general multiuser case. An efficient algorithm for this setting is developed in the next section, and its multiuser extension is presented in Section~\ref{sec:multiuser}. 

\section{Single-User Case With a Single-Antenna BS}
\label{sec:single_user}

In this section, we focus on the special case with a single-antenna BS and a single user, i.e., $M_{\mathrm B}=1$ and $K=1$. Since the maximum achievable SNR increases monotonically with the BS transmit power, the BS transmits with full power $P_{\mathrm B,\max}$ at optimum. For notational simplicity, define $\bm h_1(\bm F)\triangleq\bm H_1(\bm F)\in\C^N$ and $\bm h_2(\bm F)\triangleq\bm H_2^H(\bm F)\in\C^N$ as the first- and second-hop channel vectors, respectively, and omit their dependence on $\bm F$ when no ambiguity arises. Problem (P1) then reduces to
\begin{subequations}
	\begin{eqnarray}
		&\hspace{-3mm}\text{(P2)}:&
		\max_{\boldsymbol W_{\mathrm R},\boldsymbol F}\quad
		\frac{P_{\mathrm B,\max}\left|\bm h_2^H\bm W_{\mathrm R}\bm h_1\right|^2}
		{\sigma_{\mathrm R}^2\left\|\bm h_2^H\bm W_{\mathrm R}\right\|^2+\sigma_{\mathrm U}^2} \label{prob:single_obj}\\
		&\hspace{-3mm}\text{s.t.}& \hspace{-3mm}P_{\mathrm B,\max}\|\bm W_{\mathrm R}\bm h_1\|^2
		+\sigma_{\mathrm R}^2\|\bm W_{\mathrm R}\|_F^2
		\leq P_{\mathrm R,\max}, 
		\label{prob:single_pr}\\
		&& \hspace{-3mm} \eqref{prob:main_cap}, \eqref{prob:main_norm}.
	\end{eqnarray}
\end{subequations}
Problem (P2) remains non-convex because $\bm W_{\mathrm R}$ and $\bm F$ are coupled through both hops. Nevertheless, for any fixed $\bm F$, it follows from
\cite[Lemma~3]{2025_Nianzu_relay} that an optimal relay precoding matrix is given by
\begin{equation}
	\bm W_{\mathrm R}^{\star}(\bm F)
	=\beta(\bm F)\bm h_2\bm h_1^H,
	\label{eq:single_opt_W}
\end{equation}
where the power-normalization factor $\beta(\bm F)>0$ follows from the active relay power constraint \eqref{prob:single_pr} as 
\begin{equation}
	\beta(\bm F)=
	\sqrt{\frac{P_{\mathrm R,\max}}
		{\|\bm h_2\|^2
			\left(P_{\mathrm B,\max}\|\bm h_1\|^4
			+\sigma_{\mathrm R}^2\|\bm h_1\|^2\right)}}.
	\label{eq:single_beta}
\end{equation}
For compactness, define
$y_1(\bm F)\triangleq\|\bm h_1\|^2$ and
$y_2(\bm F)\triangleq\|\bm h_2\|^2$. Substituting \eqref{eq:single_opt_W} into the objective function of (P2) yields the maximum SNR for a given $\bm F$, denoted by $\gamma_{\mathrm{su}}^{\star}(\bm F)$, as
\begin{equation}
	\gamma_{\mathrm{su}}^{\star}(\bm F)=
	\frac{P_{\mathrm B,\max}P_{\mathrm R,\max}
		y_1(\bm F)y_2(\bm F)}
	{P_{\mathrm B,\max}\sigma_{\mathrm U}^2y_1(\bm F)
		+P_{\mathrm R,\max}\sigma_{\mathrm R}^2y_2(\bm F)
		+\sigma_{\mathrm R}^2\sigma_{\mathrm U}^2}.
	\label{eq:single_snr}
\end{equation} 
Introducing the normalized power budgets $\Gamma_{\mathrm B}\triangleq P_{\mathrm B,\max}/\sigma_{\mathrm R}^2$ and $\Gamma_{\mathrm R}\triangleq P_{\mathrm R,\max}/\sigma_{\mathrm U}^2$, equality \eqref{eq:single_snr} can be compactly rewritten as
$\gamma_{\mathrm{su}}^{\star}(\bm F)=\frac{\Gamma_{\mathrm B}\Gamma_{\mathrm R}y_1(\bm F)y_2(\bm F)}{\bar D(\bm F)}$, where $\bar D(\bm F)\triangleq \Gamma_{\mathrm B}y_1(\bm F)+\Gamma_{\mathrm R}y_2(\bm F)+1$. Since the logarithm is strictly increasing, maximizing
$\gamma_{\mathrm{su}}^{\star}(\bm F)$ is equivalent to maximizing its logarithm. After dropping the constant
$\ln(\Gamma_{\mathrm B}\Gamma_{\mathrm R})$, problem (P2) is reduced to
\begin{subequations}
	\label{prob:single_reduced}
	\begin{align}
		\max_{\boldsymbol F}\quad
		&\ln y_1(\bm F)+\ln y_2(\bm F)
		-\ln\bar D(\bm F)
		\label{prob:single_reduced_obj}\\
		\text{s.t.}\quad
		&\eqref{prob:main_cap},\ \eqref{prob:main_norm}.
	\end{align}
\end{subequations}
Thus, only $\bm F$ needs to be optimized, after which
$\bm W_{\mathrm R}^{\star}$ follows from
\eqref{eq:single_opt_W}. Despite this simplification, problem \eqref{prob:single_reduced} remains difficult to solve because its objective function is non-concave in $\bm F$ and the unit-norm constraints in \eqref{prob:main_norm} yield a non-convex feasible region. 

\subsection{Proposed Algorithm for Problem \eqref{prob:single_reduced}}

The constraints \eqref{prob:main_cap} and \eqref{prob:main_norm} restrict each pointing vector to the spherical cap
\begin{equation}
	\mathcal C_{\mathrm{cap}}
	\triangleq
	\left\{\bm x\in\R^3:
	\|\bm x\|=1,\ 
	\bm x^T\bm e_3\geq\cos\theta_{\max}\right\}.
	\label{eq:spherical_cap}
\end{equation}
Hence, the feasible set of $\bm F$ is the Cartesian product $\mathcal C_{\mathrm{cap}}^N \triangleq  \prod_{n=1}^N\mathcal C_{\mathrm{cap}}$. Although non-convex, its product structure makes the tangent-space linear oracle separable across the RAs and permits a closed-form solution. Motivated by this structure, we solve problem \eqref{prob:single_reduced} using an MFW method that combines an explicit expression for the objective gradient with a closed-form linear oracle and a feasibility-preserving retraction. 

At iteration $r$, given the current RA pointing matrix $\bm F^r$, we compute the blockwise Euclidean gradients of the objective function in \eqref{prob:single_reduced_obj} with respect to $\bm f_n$, $n\in\mathcal N$. Denoting this objective function by $J(\bm F)\triangleq\ln y_1(\bm F)+\ln y_2(\bm F)-\ln\bar D(\bm F)$, the chain rule gives
\begin{align}
	\bm g_n^r
	\triangleq{} &
	\left.\nabla_{\boldsymbol f_n}J(\bm F)\right|_{\boldsymbol F=\boldsymbol F^r} = \left(\frac{1}{y_1(\bm F^r)}
	-\frac{\Gamma_{\mathrm B}}{\bar D(\bm F^r)}
	\right)\bm g_{y_1,n}^r \nonumber\\ 
	& + \left(\frac{1}{y_2(\bm F^r)}
	-\frac{\Gamma_{\mathrm R}}{\bar D(\bm F^r)}
	\right)\bm g_{y_2,n}^r,
	\label{eq:single_objective_gradient}
\end{align}
where $\bm g_{y_i,n}^r \triangleq \left. \nabla_{\boldsymbol f_n}y_i(\bm F)\right|_{\boldsymbol F=\boldsymbol F^r}$, $i\in\{1,2\}$. To evaluate $\bm g_{y_i,n}^r$, let $h_{i,n}(\bm f_n)$ denote the $n$-th element of $\bm h_i(\bm F)$. Since $h_{i,j}(\bm f_j)$ is independent of $\bm f_n$ for $j\neq n$, only the $n$-th term in $y_i(\bm F)=\sum_{j=1}^{N}|h_{i,j}(\bm f_j)|^2$ contributes to $\bm g_{y_i,n}^r$, yielding 
\begin{align}
	\bm g_{y_i,n}^r =
	2\operatorname{Re}\!\left\{
	h_{i,n}^{*}(\bm f_n^r)
	\left.
	\nabla_{\boldsymbol f_n}h_{i,n}(\bm f_n)
	\right|_{\boldsymbol f_n=\boldsymbol f_n^r}
	\right\}.
	\label{eq:power_gradient}
\end{align}
To obtain the channel-coefficient gradient in \eqref{eq:power_gradient}, we collect all terms independent of $\bm f_n$ and write $h_{i,n}(\bm f_n)$ as
\begin{equation}
	h_{i,n}(\bm f_n)
	=
	\sum_{\ell=0}^{Q_i}
	a_{i,\ell,n}
	\left[ \bm f_n^T\bm v_{i,\ell,n}\right]_+^p,
	\label{eq:generic_channel}
\end{equation}
where $\ell=0$ denotes the LoS path, whereas
$\ell\in\{1,\ldots,Q_i\}$ indexes the NLoS paths. Moreover,
$\bm v_{1,\ell,n}$ and $\bm v_{2,\ell,n}$ denote the arrival and departure directions of path $\ell$ at RA $n$, respectively. The coefficient $a_{i,\ell,n}$ collects the orientation-independent propagation attenuation and phase as well as, for $\ell\geq1$, the scattering coefficient. Therefore, we have
\begin{align}
	&\left.
	\nabla_{\boldsymbol f_n}h_{i,n}(\bm f_n)
	\right|_{\boldsymbol f_n=\boldsymbol f_n^r} \nonumber\\
	&=
	p\sum_{\substack{\ell=0\\
			(\boldsymbol f_n^r)^T\boldsymbol v_{i,\ell,n}>0}}^{Q_i}
	a_{i,\ell,n}
	\left((\bm f_n^r)^T\bm v_{i,\ell,n}\right)^{p-1}
	\bm v_{i,\ell,n}.
	\label{eq:fundamental_gradient}
\end{align}
At $(\bm f_n^r)^T\bm v_{i,\ell,n}=0$, where the classical derivative may not exist, the corresponding term in \eqref{eq:fundamental_gradient} is set to zero for numerical implementation. 
Substituting \eqref{eq:fundamental_gradient} into \eqref{eq:power_gradient} yields $\bm g_{y_i,n}^r$, and hence $\bm g_n^r$ follows from \eqref{eq:single_objective_gradient}. 

Since the unit-norm constraint permits only first-order variations orthogonal to $\bm f_n^r$, the Euclidean gradient must be projected onto the tangent space of the unit sphere. As $\|\bm f_n^r\|=1$, the corresponding orthogonal projector is $\bm I_3-\bm f_n^r(\bm f_n^r)^T$. Hence, the projected gradient, denoted by $\bm p_n^r$, is given by $\bm p_n^r = \left(\bm I_3-\bm f_n^r(\bm f_n^r)^T\right)\bm g_n^r$. Using the projected gradients, the MFW linear oracle selects a feasible reference pointing vector $\bm s_n^r\in\mathcal C_{\mathrm{cap}}$ for each RA to maximize the local first-order ascent. Since $\mathcal C_{\mathrm{cap}}^N$ is a Cartesian product, these reference vectors can be determined independently as
\begin{equation}
	\bm s_n^r
	\in
	\arg\max_{\widetilde{\boldsymbol f}_n\in\mathcal C_{\mathrm{cap}}}
	(\bm p_n^r)^T\widetilde{\bm f}_n.
	\label{eq:linear_oracle_problem}
\end{equation}
When $\bm p_n^r=\bm 0$, every point in $\mathcal C_{\mathrm{cap}}$ is optimal for \eqref{eq:linear_oracle_problem}. Selecting $\bm s_n^r=\bm f_n^r$ yields a zero search direction. For $\bm p_n^r\neq\bm 0$, let $\widehat{\bm p}_n^r\triangleq\bm p_n^r/\|\bm p_n^r\|$ denote its unit direction and $\widehat{\bm p}_{n,\mathrm{xy}}^r\triangleq(\bm I_3-\bm e_3\bm e_3^T)\widehat{\bm p}_n^r$ its projection onto the $x$--$y$ plane. The solution to \eqref{eq:linear_oracle_problem} is then given by 
\begin{equation}
	\bm s_n^r
	=
	\begin{cases}
		\widehat{\bm p}_n^r,
		&(\widehat{\bm p}_n^r)^T\bm e_3\geq c_{\max},\\[1mm]
		\displaystyle
		\left[
		s_{\max}
		\frac{(\widehat{\bm p}_{n,\mathrm{xy}}^r)^T}
		{\left\| \widehat{\bm p}_{n,\mathrm{xy}}^r\right\|},
		c_{\max}
		\right]^T,
		&\substack{(\widehat{\bm p}_n^r)^T\bm e_3<c_{\max},\\
			\left\| \widehat{\bm p}_{n,\mathrm{xy}}^r\right\| >0},\\[2mm]
		\left[s_{\max}\bm u^T,c_{\max}\right]^T,
		&\substack{(\widehat{\bm p}_n^r)^T\bm e_3<c_{\max},\\
		\left\| \widehat{\bm p}_{n,\mathrm{xy}}^r\right\|=0}, 
	\end{cases}
	\label{eq:closed_form_oracle}
\end{equation}
where $c_{\max}\triangleq\cos\theta_{\max}$, $s_{\max}\triangleq\sin\theta_{\max}$, and $\bm u\in\R^2$ is any unit vector. The third case corresponds to $\widehat{\bm p}_n^r=-\bm e_3$, for which every point on the boundary of $\mathcal C_{\mathrm{cap}}$ is optimal. 

The MFW search direction from the current pointing vector $\bm f_n^r$ toward the reference point $\bm s_n^r$ is defined as
\begin{equation}
	\bm d_n^r
	\triangleq
	\bm s_n^r-\bm f_n^r.
	\label{eq:mfw_direction}
\end{equation} 
For a trial step size $\rho\in(0,1]$ satisfying
$\bm f_n^r+\rho\bm d_n^r\neq\bm 0$ for all $n\in\mathcal N$, the candidate pointing vectors are obtained by moving from $\bm f_n^r$ along $\bm d_n^r$ and retracting the resulting vectors onto the unit sphere as
\begin{equation}
	\bm f_n^{r+1}(\rho)
	=
	\frac{\bm f_n^r+\rho\bm d_n^r}
	{\left\|\bm f_n^r+\rho\bm d_n^r\right\|},
	\quad \forall n\in\mathcal N.
	\label{eq:retraction}
\end{equation}
Since $\bm f_n^r+\rho\bm d_n^r=(1-\rho)\bm f_n^r+\rho\bm s_n^r$, the numerator in \eqref{eq:retraction} is a convex combination of two points in $\mathcal C_{\mathrm{cap}}$. Hence, its component along $\bm e_3$ is at least $c_{\max}$ and its norm is no greater than one. Since $c_{\max}\geq0$, the normalization preserves the tilt constraint while enforcing the unit-norm constraint, ensuring that $\bm f_n^{r+1}(\rho)\in\mathcal C_{\mathrm{cap}}$. Collecting the candidate pointing vectors gives $\bm F^{r+1}(\rho) \triangleq \left[ \bm f_1^{r+1}(\rho),\ldots,\bm f_N^{r+1}(\rho)\right]$.

To determine how far to move along the MFW directions, define the aggregate first-order gain as $\Delta^r \triangleq \sum_{n=1}^{N}(\bm p_n^r)^T\bm d_n^r$, 
which is referred to as the MFW gap. The step size is then selected by Armijo backtracking to ensure a sufficient increase in the objective function. Starting from $\rho=1$, the trial step size is successively updated via $\rho\leftarrow\tau\rho$ with $\tau\in(0,1)$ until
\begin{equation}
	J\bigl(\bm F^{r+1}(\rho)\bigr)
	\geq
	J(\bm F^r)
	+c_{\mathrm A}\rho\Delta^r
	\label{eq:armijo}
\end{equation}
is satisfied or a prescribed maximum number $I_{\mathrm A}$ of trial step sizes has been examined, where
$c_{\mathrm A}\in(0,1)$ is the Armijo parameter. If an acceptable step size $\rho^r$ is found, the pointing matrix is updated as $\bm F^{r+1}=\bm F^{r+1}(\rho^r)$; otherwise, the current pointing matrix is retained, i.e., $\bm F^{r+1}=\bm F^r$.

Since $\bm f_n^r$ is feasible for the linear oracle in
\eqref{eq:linear_oracle_problem}, we have $(\bm p_n^r)^T(\bm s_n^r-\bm f_n^r)\geq0$ for every $n\in\mathcal N$, and hence $\Delta^r\geq0$. Therefore, every accepted update satisfies $J(\bm F^{r+1})\geq J(\bm F^r)$ according to
\eqref{eq:armijo}, while an unsuccessful line search leaves the objective unchanged. Consequently, the proposed MFW algorithm generates a non-decreasing objective sequence. This monotonicity, together with the upper boundedness of $J(\bm F)$ over the compact feasible set $\mathcal C_{\mathrm{cap}}^N$, guarantees that $\{J(\bm F^r)\}$ converges. Regarding complexity, for fixed numbers of scattering paths, computing all block gradients and closed-form linear oracles requires $\mathcal O(N)$ operations per iteration. The subsequent Armijo line search reevaluates the objective function at each trial, also with complexity $\mathcal O(N)$. Hence, for $I_{\mathrm{MFW}}$ iterations and at most $I_{\mathrm A}$ Armijo trials per iteration, the overall complexity is given by $\mathcal O(I_{\mathrm{MFW}}I_{\mathrm A}N)$.

\subsection{Antenna Pointing Structure Under Symmetric Far-Field LoS Geometry}

To gain analytical insight into the RA pointing structure, we further consider a symmetric far-field LoS geometry in which the BS and user are equidistant from the relay. The first- and second-hop propagation directions lie in the same plane and form signed angles $+\varphi$ and $-\varphi$ with the positive $z$-axis, respectively, where $\varphi\in(0,\pi/2)$. Thus, the positive $z$-axis bisects the two directions. Projecting any feasible pointing vector onto the plane containing the two propagation directions and renormalizing the result preserves feasibility without decreasing either hop gain, so restricting the pointing vectors to that plane incurs no loss of optimality. Let $\theta_n\in[-\theta_{\max},\theta_{\max}]$ denote the signed pointing angle of RA $n$ from the positive $z$-axis. When $p>0$, $N$ is even, $\theta_{\max}\geq\varphi$, and $\Gamma_{\mathrm B}=\Gamma_{\mathrm R}$, the globally optimal RA pointing structure is characterized as follows.

\begin{figure}[!t]
	\vspace{-2mm}
	\centering
	\includegraphics[scale=0.66]{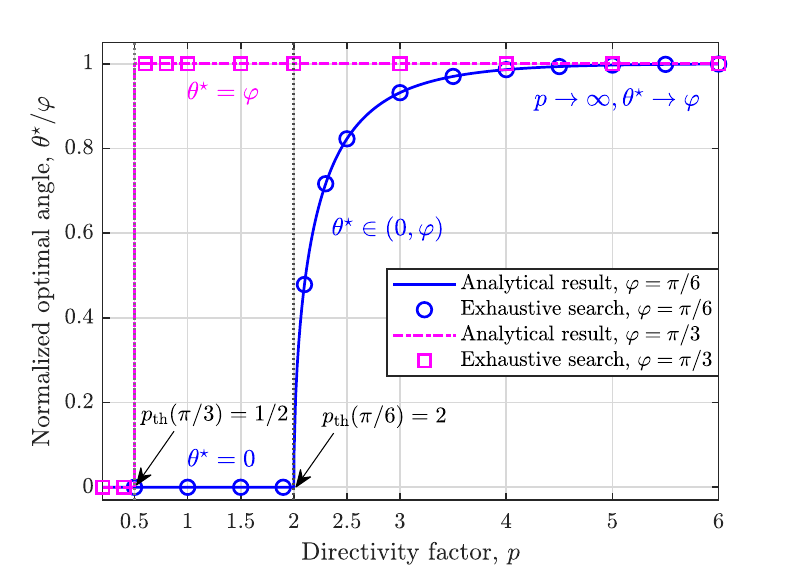}
	\caption{Normalized optimal RA pointing angle versus directivity factor.}
	\label{fig:prop_verification}
	\vspace{-1mm}
\end{figure}
 
\begin{proposition}
	\label{prop:splitting}
	Define the critical directivity factor as
	\begin{equation}
		p_{\mathrm{th}}(\varphi)
		\triangleq
		\begin{cases}
			\displaystyle\frac{1}{2\sin^2\varphi},
			& 0<\varphi<\pi/4,\\[2mm]
			\displaystyle\frac{\ln 2}{-2\ln(\cos\varphi)},
			& \pi/4\leq\varphi< \pi/2.
		\end{cases}
		\label{eq:splitting_threshold}
	\end{equation}
   A globally optimal RA pointing configuration can be constructed from a non-negative angle $\theta^\star$ under the following two regimes.

   (i) If $0<p\leq p_{\mathrm{th}}(\varphi)$, then $\theta^\star=0$, and setting $\bm f_n^\star=\bm e_3$ for all $n\in\mathcal N$ yields a globally optimal solution to problem \eqref{prob:single_reduced}.
   
   (ii) If $p>p_{\mathrm{th}}(\varphi)$, then $\theta^\star>0$. For $0<\varphi<\pi/4$, $\theta^\star$ is the unique solution in $(0,\varphi)$ to
   \begin{equation}
   	\frac{\sin(\varphi-\theta^\star)}
   	{\sin(\varphi+\theta^\star)}
   	=
   	\left[
   	\frac{\cos(\varphi+\theta^\star)}
   	{\cos(\varphi-\theta^\star)}
   	\right]^{2p-1}.
   	\label{eq:optimal_splitting_angle}
   \end{equation}
   This solution can be obtained via one-dimensional root finding. As $p$ increases above $p_{\mathrm{th}}(\varphi)$, $\theta^\star$ increases continuously from zero and approaches $\varphi$ as $p\to\infty$. For $\pi/4\leq\varphi<\pi/2$, we have $\theta^\star=\varphi$. In either case, a globally optimal solution to problem \eqref{prob:single_reduced} is obtained by pointing $N/2$ RAs at $+\theta^\star$ and the remaining $N/2$ RAs at $-\theta^\star$. 
\end{proposition}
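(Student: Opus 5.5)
The plan is to turn the global problem over $\bm F$ into a scalar problem, in three steps: (a) write the reduced objective as an increasing function of the two hop gains; (b) use two elementary upper bounds that the $\pm\theta$ split attains simultaneously; (c) analyze one per-antenna gain function $s(\theta)$.

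\textbf{Step 1: reduce to the hop gains.} In the symmetric far-field LoS model all RAs see the same distance and the same directions. Hence, with signed angles and the in-plane restriction argued before the proposition, we can write
\[
|h_{1,n}|^2=c\,a(\theta_n),\qquad |h_{2,n}|^2=c\,b(\theta_n),
\]
where $a(\theta)=[\cos(\theta-\varphi)]_+^{2p}$, $b(\theta)=[\cos(\theta+\varphi)]_+^{2p}=a(-\theta)$, and $c>0$ is common to both hops. With $\Gamma_{\mathrm B}=\Gamma_{\mathrm R}=\Gamma$, \eqref{eq:single_snr} gives
\[
\frac{1}{\gamma_{\mathrm{su}}^\star}=\frac{y_1+y_2}{\Gamma y_1y_2}+\frac{1}{\Gamma^2y_1y_2}.
\]
So $\gamma_{\mathrm{su}}^\star$ increases strictly in $y_1y_2$. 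For a fixed sum $S=y_1+y_2$, it is therefore maximized when the product is largest, i.e.\ at $y_1=y_2=S/2$. That value, $\Gamma^2(S/2)^2/(\Gamma S+1)$, is increasing in $S$.

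\textbf{Step 2: upper bounds attained by the split.} For any feasible configuration, with $s(\theta)\triangleq a(\theta)+b(\theta)$,
\[
S=c\sum_n s(\theta_n)\le cN\max_{|\theta|\le\theta_{\max}}s(\theta),\qquad y_1y_2\le S^2/4.
\]
Since $s$ is even, take a maximizer $\theta^\star\ge0$. Pointing $N/2$ RAs at $+\theta^\star$ and $N/2$ at $-\theta^\star$ (this uses $N$ even) gives $y_1=y_2=cN s(\theta^\star)/2$, which attains both bounds at once. It is therefore globally optimal for \eqref{prob:single_reduced}. When $\theta^\star=0$ this is the common pointing $\bm f_n=\bm e_3$. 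Everything now reduces to maximizing $s$ on $[0,\theta_{\max}]$.

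\textbf{Step 3: maximize $s$.} First restrict to $[0,\varphi]$. For $\theta\ge\varphi$ (and $\theta_{\max}\ge\varphi$, so $\theta=\varphi$ is feasible), both $\cos(\theta-\varphi)$ and $[\cos(\theta+\varphi)]_+$ are non-increasing, so $s$ is non-increasing there. Next, differentiate using
\[
\frac{d^2}{dx^2}\cos^{2p}x=2p\cos^{2p-2}x\bigl[(2p-1)\sin^2x-\cos^2x\bigr].
\]
This gives $s'(0)=0$ and shows $s''(0)>0$ iff $2p\sin^2\varphi>1$, i.e.\ $p>1/(2\sin^2\varphi)$. Setting $s'(\theta)=0$ on $(0,\varphi)$ with $\varphi+\theta<\pi/2$ gives
\[
\cos^{2p-1}(\varphi-\theta)\sin(\varphi-\theta)=\cos^{2p-1}(\varphi+\theta)\sin(\varphi+\theta),
\]
which is exactly \eqref{eq:optimal_splitting_angle}.

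\emph{Case $\varphi\ge\pi/4$.} For $\theta\ge\pi/2-\varphi$ the second-hop term vanishes and the first-hop term increases toward $\theta=\varphi$. Comparing the two endpoint candidates, $s(0)=2\cos^{2p}\varphi$ and $s(\varphi)=1$, gives the threshold $\ln2/(-2\ln\cos\varphi)$.

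\emph{Case $\varphi<\pi/4$.} Here $\varphi+\theta<\pi/2$ throughout $[0,\varphi]$. Also $s(\varphi)=1+\cos^{2p}2\varphi$, and one checks that the interior critical point or $\theta=0$ dominates this endpoint.

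\textbf{Main obstacle.} The hard part is the global shape of $s$ on $(0,\varphi)$. We need at most one interior critical point: none when $p\le p_{\mathrm{th}}$, so $0$ is the maximizer, and exactly one maximizer when $p>p_{\mathrm{th}}$. We must also exclude an interior point beating $\theta=0$ in the $\varphi\ge\pi/4$ case. The first approach to try: take logarithms of \eqref{eq:optimal_splitting_angle} and define
\[
R(\theta)=\ln\frac{\sin(\varphi-\theta)}{\sin(\varphi+\theta)}-(2p-1)\ln\frac{\cos(\varphi+\theta)}{\cos(\varphi-\theta)}.
\]
Both logarithms vanish at $0$, and we would show $R$ has at most one zero in $(0,\varphi)$, e.g.\ via a convexity or sign argument on $R'$ in $u=\tan\theta$. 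Continuity and the limits ($\theta^\star\downarrow0$ as $p\downarrow p_{\mathrm{th}}$ from the bifurcation at $s''(0)=0$, and $\theta^\star\to\varphi$ as $p\to\infty$ because the right side of \eqref{eq:optimal_splitting_angle} tends to $0$) would then follow from the implicit function theorem applied to this unique root.
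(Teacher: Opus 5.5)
Your Steps 1 and 2 match the paper's reduction: AM--GM gives $y_1y_2\le S^2/4$, the resulting bound is increasing in $S$, and the $\pm\theta^\star$ split attains it because $a(-\theta)=b(\theta)$. The restriction to $[0,\varphi]$ and the stationary equation also agree with the paper.

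The gap is that what you call the main obstacle is the actual content of the proposition, and you leave it open. Without it you are missing four things:
\begin{itemize}
\item uniqueness of the root of \eqref{eq:optimal_splitting_angle} in $(0,\varphi)$;
\item global optimality of $\theta=0$ for $p\le p_{\mathrm{th}}(\varphi)$. The sign of $s''(0)$ is only local, and it vanishes exactly at $p=p_{\mathrm{th}}(\varphi)$ when $\varphi<\pi/4$. Since $R(0)=0$ for every $p$, ``at most one zero of $R$ in $(0,\varphi)$'' does not by itself fix the sign of $s'$ near $0^+$ in that boundary case;
\item for $\varphi\ge\pi/4$, a justification for comparing only the endpoints, which requires first excluding interior maximizers on $(0,\pi/2-\varphi)$;
\item the implicit-function argument for continuity of $\theta^\star$, which presupposes a nondegenerate root you have not established.
\end{itemize}
Your suggested convexity or sign argument on $R'$ in $\tan\theta$ is not carried out. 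Treating $R$ separately for each fixed $p$ also makes the problem harder than necessary.

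The missing idea is to separate $p$ from $\theta$. Where both gains are active and $0<\theta<\varphi$, we have $\ln[\cos(\varphi+\theta)/\cos(\varphi-\theta)]<0$. Dividing $R$ by the negative of this logarithm shows that $s'(\theta)$ has the same sign as $2p-T_\varphi(\theta)$, where $T_\varphi(\theta)\triangleq 1+\ln[\sin(\varphi-\theta)/\sin(\varphi+\theta)]/\ln[\cos(\varphi+\theta)/\cos(\varphi-\theta)]$ does not depend on $p$. Everything then reduces to the shape of this one function.
\begin{itemize}
\item For $\varphi<\pi/4$, $T_\varphi$ increases strictly from $1/\sin^2\varphi$ (as $\theta\to0^+$) to $+\infty$ (as $\theta\to\varphi^-$). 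Hence $s'<0$ on all of $(0,\varphi)$ whenever $2p\le 1/\sin^2\varphi$, boundary case included. Otherwise $s'$ changes sign exactly once, from positive to negative, at $\theta^\star=T_\varphi^{-1}(2p)$. This $\theta^\star$ is automatically continuous and increasing in $p$ and tends to $\varphi$, with no implicit function theorem needed.
\item For $\varphi>\pi/4$, $T_\varphi$ decreases strictly from $1/\sin^2\varphi$ to $1$ on $(0,\pi/2-\varphi)$. So any critical point there is a sign change from negative to positive, i.e., a local minimum. Since $s=a$ is increasing on $[\pi/2-\varphi,\varphi]$, this is what justifies comparing only $s(0)=2\cos^{2p}\varphi$ and $s(\varphi)=1$.
\item For $\varphi=\pi/4$, $T_\varphi\equiv 2$.
\end{itemize}
The one computation you would still owe is the strict monotonicity of $T_\varphi$ on these intervals, which the paper asserts by direct differentiation.
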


\begin{proof}
    See the Appendix.
\end{proof}	

\begin{remark} \label{rem:direct_single}
	\rm When the direct BS--user link is non-negligible, applying maximal-ratio combining to the signals received over the direct and relay links yields an end-to-end SNR of $\gamma_{\mathrm d}+\gamma_{\mathrm{su}}^\star(\bm F)$, where $\gamma_{\mathrm d}$ denotes the direct-link SNR. Since $\gamma_{\mathrm d}$ is independent of $\bm W_{\mathrm R}$ and $\bm F$, the proposed single-user MFW algorithm and the pointing characterization in Proposition~\ref{prop:splitting} remain directly applicable.
\end{remark} 

Proposition~\ref{prop:splitting} reveals how antenna directivity governs the trade-off between balancing the two hops and aligning with each hop individually. A broad antenna pattern allows common pointing to provide sufficient gain over both hops, whereas a narrower pattern makes such simultaneous coverage inefficient and thereby favors antenna splitting.

Fig.~\ref{fig:prop_verification} illustrates the characterization in Proposition~\ref{prop:splitting} for $\varphi=\pi/6$ and $\varphi=\pi/3$, which represent the two angular regimes considered therein. To verify this characterization numerically, we also obtain the optimal pointing angles through exhaustive search. The analytical and exhaustive-search results agree closely, corroborating the derived thresholds and illustrating the transition from common pointing to antenna splitting as $p$ exceeds the $\varphi$-dependent threshold $p_{\mathrm{th}}(\varphi)$. 

\section{General Multiuser Case}
\label{sec:multiuser}

Building on the MFW procedure introduced for the single-user case, we now address problem (P1) under the general system setting with an $M_{\mathrm B}$-antenna BS and $K$ users. In this setting, inter-user interference makes the SINR of each user jointly dependent on $\bm W_{\mathrm B}$, $\bm W_{\mathrm R}$, and $\bm F$, and the relay precoding matrix no longer admits the closed-form solution available in the single-user case. Although AO is a natural approach to handling the coupling among these variables, applying it directly to (P1) would leave fractional SINR constraints in the resulting subproblems, complicating the efficient optimization of each variable block. We therefore apply the quadratic transform \cite{2018_Kaiming_FP} to recast (P1) into an equivalent auxiliary-variable formulation amenable to AO. Within the AO iterations, the RA pointing matrix $\bm F$ is updated 
using a safeguarded extension of the MFW method. 

Specifically, recall that $\mathrm{SINR}_k=|A_k|^2/B_k$, where $B_k>0$. For an auxiliary variable $\alpha_k\in\C$, we have
\begin{equation}
	\frac{|A_k|^2}{B_k}
	-B_k\left|\alpha_k-\frac{A_k}{B_k}\right|^2
	=
	2\operatorname{Re}\{\alpha_k^*A_k\}
	-|\alpha_k|^2B_k.
\end{equation}
The subtracted term is non-negative and becomes zero at $\alpha_k=A_k/B_k$. Maximizing the above expression over $\alpha_k$ therefore gives
\begin{equation}
	\frac{|A_k|^2}{B_k}
	=
	\max_{\alpha_k\in\C}
	\left(
	2\operatorname{Re}\{\alpha_k^*A_k\}
	-|\alpha_k|^2B_k
	\right),
	\label{eq:qt_identity}
\end{equation}
with the optimal auxiliary variable given by $\alpha_k^\star=A_k/B_k$. Let $\bm\alpha\triangleq[\alpha_1,\ldots,\alpha_K]^T\in\C^K$ collect all auxiliary variables, and define $\hat g_k(\bm W_{\mathrm B},\bm W_{\mathrm R},\bm F,\alpha_k)\triangleq2\operatorname{Re}\{\alpha_k^*A_k\}-|\alpha_k|^2B_k$. For notational simplicity, the arguments of $\hat g_k$ are omitted hereafter whenever no ambiguity arises. Problem (P1) can then be equivalently reformulated as 
\begin{subequations}\label{prob:fp}
	\begin{eqnarray}
		&\text{(P3)}:&
		\max_{\substack{\boldsymbol W_{\mathrm B},\boldsymbol W_{\mathrm R},\boldsymbol F,\\
				\boldsymbol\alpha,\gamma}}
		\hspace{2mm} \gamma
		\label{prob:fp_obj}\\
		&\text{s.t.}&  \hspace{-3mm} \hat g_k \geq\gamma,\  \forall k\in\mathcal K,\label{prob:fp_sinr}\\
		&& \hspace{-3mm} \eqref{prob:main_pb}-\eqref{prob:main_norm}.
	\end{eqnarray}
\end{subequations}
To solve problem (P3), we employ AO over the three design-variable blocks $\bm F$, $\bm W_{\mathrm B}$, and $\bm W_{\mathrm R}$. Before each block update, $\bm\alpha$ is refreshed as $\alpha_k\leftarrow A_k/B_k$, $\forall k\in\mathcal K$, using the latest values of the design variables. This refresh restores the tightness of the quadratic-transform representation at the current iterate, ensuring that each accepted block update does not decrease the original minimum-SINR objective. The three block updates are detailed below. 

\subsection{RA Pointing Matrix Update}

For fixed $\bm\alpha$, $\bm W_{\mathrm B}$, and $\bm W_{\mathrm R}$, the RA pointing subproblem involves the same spherical-cap constraints as in the single-user case, motivating an extension of the MFW method developed therein to the present setting. This extension, however, requires addressing two additional issues. First, the $\bm F$-dependent relay power constraint necessitates a feasibility safeguard. Second, constructing the search direction requires a differentiable scalar objective function in $\bm F$. To address the second issue, we eliminate $\gamma$ from the transformed SINR constraints and equivalently write the RA pointing subproblem in the following max-min form:
\begin{align}\label{prob:F_nonsmooth}
	\max_{\bm F}\hspace{2mm} \min_{k\in\mathcal K} \hspace{1mm} \hat g_k(\bm F) \hspace{7mm}
	\text{s.t.}\hspace{3mm} \eqref{prob:main_pr}-\eqref{prob:main_norm}.
\end{align}
The objective $g_{\min}(\bm F)\triangleq\min_{k\in\mathcal K}\hat g_k(\bm F)$ in \eqref{prob:F_nonsmooth} is non-smooth. To obtain the gradient required by the MFW update, we employ a smooth log-sum-exp approximation. Specifically, define
$M(\bm F)\triangleq\max_{k\in\mathcal K}\{-\hat g_k(\bm F)\}=-g_{\min}(\bm F)$.
For any $\mu>0$, the definition of $M(\bm F)$ gives
$e^{-\mu \hat g_k(\boldsymbol F)}\leq e^{\mu M(\boldsymbol F)}$ for every $k\in\mathcal K$, with equality for at least one $k$. Consequently, 
$e^{\mu M(\boldsymbol F)}\leq\sum_{k=1}^{K}e^{-\mu \hat g_k(\boldsymbol F)} \leq K e^{\mu M(\boldsymbol F)}$.
Taking the logarithm and dividing by $\mu$ yields
\begin{equation}
	M(\bm F)
	\leq
	\frac{1}{\mu}\ln\!\left(
	\sum_{k=1}^{K}e^{-\mu \hat g_k\left( \boldsymbol F\right)}
	\right)
	\leq
	M(\bm F)+\frac{\ln(K)}{\mu}.
	\label{eq:log_sum_exp_bound}
\end{equation}
Since $M(\bm F)=-g_{\min}(\bm F)$, reversing the signs in \eqref{eq:log_sum_exp_bound} yields
\begin{align}
	g_{\min}(\bm F)-\frac{\ln(K)}{\mu}
	\leq 
	-\frac{1}{\mu}\ln\!\left(
	\sum_{k=1}^{K}e^{-\mu \hat g_k\left( \boldsymbol F\right) }
	\right) 
	\leq 
	g_{\min}(\bm F).
	\label{eq:smooth_min_bound}
\end{align}
Thus, $G_\mu(\bm F) \triangleq 	-\frac{1}{\mu}\ln\!\left(
\sum_{k=1}^{K}e^{-\mu \hat g_k\left( \boldsymbol F\right)}\right)$ is a smooth lower bound on $g_{\min}(\bm F)$, with an approximation error no greater than $\ln(K)/\mu$. Based on this smooth approximation, we next derive a safeguarded MFW update for $\bm F$. The gradient of $G_\mu(\bm F)$ is used to construct the search direction. A candidate update is accepted only if it does not decrease $g_{\min}(\bm F)$ and satisfies the relay power constraint. 

In the derivation below, we omit the outer AO-iteration index for notational simplicity. At MFW iteration $r$, given the current RA pointing matrix $\bm F^r$, the blockwise Euclidean gradient of $G_\mu(\bm F)$ with respect to $\bm f_n$ is given by
\begin{equation}
	\bm g_{\mu,n}^r
	\triangleq
	\left.
	\nabla_{\boldsymbol f_n}G_\mu(\bm F)
	\right|_{\boldsymbol F=\boldsymbol F^r}
	=
	\sum_{k=1}^{K}\frac{e^{-\mu \hat g_k(\boldsymbol F^r)}} {\sum_{j=1}^{K}e^{-\mu \hat g_j(\boldsymbol F^r)}}\bm g_{k,n}^r,
	\label{eq:lse_gradient}
\end{equation}
where $\bm g_{k,n}^r \triangleq\left.\nabla_{\boldsymbol f_n}\hat g_k(\bm F) \right|_{\boldsymbol F=\boldsymbol F^r}$.  For fixed $\alpha_k$, the chain rule gives
\begin{equation}
	\bm g_{k,n}^r
	=
	2\operatorname{Re}\!\left\{
	\alpha_k^*\bm a_{k,n}^r
	\right\}
	-
	|\alpha_k|^2\bm b_{k,n}^r,
	\label{eq:gradient_gk_basic}
\end{equation}
where $\bm a_{k,n}^r \triangleq \left. \nabla_{\boldsymbol f_n}A_k(\bm F) \right|_{\boldsymbol F=\boldsymbol F^r}$ and $\bm b_{k,n}^r \triangleq \left. \nabla_{\boldsymbol f_n}B_k(\bm F) \right|_{\boldsymbol F=\boldsymbol F^r}$. To evaluate $\bm a_{k,n}^r$, observe that $\bm f_n$ affects $A_k$ only through the $n$-th row of $\bm H_1$ and the $n$-th element of $\bm h_{2,k}^{H}$. Let $\bm H_1^r\triangleq\bm H_1(\bm F^r)$ and $\bm h_{2,k}^r\triangleq\bm h_{2,k}(\bm F^r)$ denote the corresponding channel values at the current iterate. Denote their block derivatives by $\bm d_{1,n,m}^r\triangleq\left.\nabla_{\boldsymbol f_n}[\bm H_1(\bm F)]_{n,m}\right|_{\boldsymbol F=\boldsymbol F^r}$ and $\bm d_{2,k,n}^r\triangleq\left.\nabla_{\boldsymbol f_n}[\bm h_{2,k}^{H}(\bm F)]_n\right|_{\boldsymbol F=\boldsymbol F^r}$. Applying the product rule yields
\begin{align}
	\bm a_{k,n}^r
	={}&
	[\bm W_{\mathrm R}\bm H_1^r\bm w_{\mathrm B,k}]_n
	\bm d_{2,k,n}^r
	\nonumber\\
	&+
	[(\bm h_{2,k}^r)^H\bm W_{\mathrm R}]_n
	\sum_{m=1}^{M_{\mathrm B}}
	w_{\mathrm B,k,m}\bm d_{1,n,m}^r.
	\label{eq:grad_Ak}
\end{align}
To evaluate $\bm b_{k,n}^r$, we separately consider the inter-user interference and forwarded relay-noise terms in $B_k$. For $j\neq k$, define the effective interference coefficient as $S_{k,j}(\bm F)\triangleq\bm h_{2,k}^{H}(\bm F)\bm W_{\mathrm R}\bm H_1(\bm F)\bm w_{\mathrm B,j}$, and let $S_{k,j}^r\triangleq S_{k,j}(\bm F^r)$ and $\bm z_{k,j,n}^r\triangleq\left.\nabla_{\boldsymbol f_n}S_{k,j}(\bm F)\right|_{\boldsymbol F=\boldsymbol F^r}$. Since $S_{k,j}(\bm F)$ differs from $A_k(\bm F)$ only in the BS beamforming vector, $\bm z_{k,j,n}^r$ follows from \eqref{eq:grad_Ak} by replacing $\bm w_{\mathrm B,k}$ with $\bm w_{\mathrm B,j}$. The user-noise term is independent of $\bm F$. Differentiating the remaining terms in $B_k$ therefore gives
\begin{align}
	\bm b_{k,n}^r
	={}&
	2\sum_{j\neq k}
	\operatorname{Re}\!\left\{
	(S_{k,j}^r)^*\bm z_{k,j,n}^r
	\right\}
	\nonumber\\
	&+
	2\sigma_{\mathrm R}^2
	\operatorname{Re}\!\left\{
	[\bm W_{\mathrm R}\bm W_{\mathrm R}^{H}
	\bm h_{2,k}^r]_n
	\bm d_{2,k,n}^r
	\right\}.
	\label{eq:grad_Bk}
\end{align} 
Despite the additional indices $m$ and $k$, the multiuser channel coefficients retain the same pathwise dependence on $\bm f_n$ as in \eqref{eq:generic_channel}. Thus, $\bm d_{1,n,m}^r$ and $\bm d_{2,k,n}^r$ are obtained by applying \eqref{eq:fundamental_gradient} pathwise, and their explicit expressions are omitted for brevity. With these derivatives, \eqref{eq:grad_Ak} and \eqref{eq:grad_Bk} determine $\bm g_{k,n}^r$ through \eqref{eq:gradient_gk_basic}, which in turn yields $\bm g_{\mu,n}^r$ from \eqref{eq:lse_gradient}. 

With $\bm g_{\mu,n}^r$ available, we follow the single-user MFW construction and project it onto the tangent space at $\bm f_n^r$ as $\bm p_{\mu,n}^r\triangleq\left(\bm I_3-\bm f_n^r(\bm f_n^r)^T\right)\bm g_{\mu,n}^r$. Applying the closed-form oracle in \eqref{eq:closed_form_oracle} to $\bm p_{\mu,n}^r$ yields $\bm s_{\mu,n}^r$. The resulting MFW direction is $\bm d_{\mu,n}^r\triangleq\bm s_{\mu,n}^r-\bm f_n^r$. The associated MFW gap is defined as $\Delta_\mu^r\triangleq\sum_{n=1}^{N}(\bm p_{\mu,n}^r)^T\bm d_{\mu,n}^r$. Since $\bm f_n^r$ is feasible for the $n$-th linear oracle, $\Delta_\mu^r$ is non-negative and serves as a first-order ascent measure. For a trial step size $\rho\in(0,1]$ satisfying
$\bm f_n^r+\rho\bm d_{\mu,n}^r\neq\bm 0$ for all $n\in\mathcal N$, we form the candidate matrix $\bm F^{r+1}(\rho)$ by normalizing each $\bm f_n^r+\rho\bm d_{\mu,n}^r$ in the same manner as in \eqref{eq:retraction}. By the same argument as in the single-user case, $\bm F^{r+1}(\rho)$ satisfies the spherical-cap constraints. However, the retraction alone does not enforce the $\bm F$-dependent relay power constraint, and an increase in $G_\mu(\bm F)$ does not necessarily imply an increase in $g_{\min}(\bm F)$. The Armijo line search is therefore supplemented with two safeguards. Specifically, starting from $\rho=1$, the trial step size is successively updated via $\rho\leftarrow\tau\rho$ with $\tau\in(0,1)$, until the following conditions are satisfied or a prescribed maximum number $I_{\mathrm A}$ of trial step sizes is reached:   
\begin{align}
	G_\mu(\bm F^{r+1}(\rho))
	&\geq
	G_\mu(\bm F^r)
	+c_{\mathrm A}\rho\Delta_\mu^r,
	\label{eq:multi_armijo}\\
	\tr\!\left(
	\bm W_{\mathrm R}
	\bm R_{\mathrm{in}}(\bm F^{r+1}(\rho))
	\bm W_{\mathrm R}^{H}
	\right)
	&\leq P_{\mathrm R,\max},
	\label{eq:multi_power_safeguard}\\
	g_{\min}(\bm F^{r+1}(\rho))
	&\geq g_{\min}(\bm F^r).
	\label{eq:multi_merit_safeguard}
\end{align}
The first condition ensures a sufficient increase in the smoothed objective, whereas the other two preserve relay power feasibility and ensure that $g_{\min}(\bm F)$ is non-decreasing, respectively. To relate the third condition to the original objective, recall that the quadratic transform satisfies $\hat g_k(\bm F,\alpha_k)\leq\mathrm{SINR}_k(\bm F)$ for any $\alpha_k$, with equality at $\bm F^r$ after refreshing $\alpha_k=A_k(\bm F^r)/B_k(\bm F^r)$. Hence, \eqref{eq:multi_merit_safeguard} ensures that the RA update does not decrease the original minimum SINR. Upon finding a step size $\rho^r$ satisfying the above conditions, we update the pointing matrix as $\bm F^{r+1}=\bm F^{r+1}(\rho^r)$. If none of the $I_{\mathrm A}$ trial step sizes satisfies all three conditions, we retain the current pointing matrix by setting $\bm F^{r+1}=\bm F^r$.

\subsection{BS Precoding Matrix Update}

For fixed $\bm\alpha$, $\bm F$, and $\bm W_{\mathrm R}$, define the effective cascaded channel for user $k$ as $\bar{\bm h}_k^{H}\triangleq\bm h_{2,k}^{H}(\bm F)\bm W_{\mathrm R}\bm H_1(\bm F)$, which collects the two-hop channel and relay precoding terms independent of $\bm W_{\mathrm B}$. Then, $\hat g_k$ takes the following equivalent form: \looseness=-1
\begin{align}
	\hat g_k(\bm W_{\mathrm B})
	={}&2\operatorname{Re}\!\left\{
	\alpha_k^*\bar{\bm h}_k^H\bm w_{\mathrm B,k}
	\right\}
	-|\alpha_k|^2
	\sum_{j\ne k}
	\left|\bar{\bm h}_k^H\bm w_{\mathrm B,j}\right|^2
	\nonumber\\
	&-|\alpha_k|^2\left(
	\sigma_{\mathrm R}^2
	\left\| \bm h_{2,k}^{H}\bm W_{\mathrm R}\right\|^2
	+\sigma_{\mathrm U,k}^2
	\right),
	\label{eq:gk_WB}
\end{align}
which is a concave quadratic function of $\bm W_{\mathrm B}$. Accordingly, the corresponding subproblem is formulated as 
\begin{subequations}\label{prob:WB}
	\begin{align}
		\max_{\boldsymbol W_{\mathrm B},\gamma}\quad
		&\gamma
		\label{prob:WB_obj}\\
		\text{s.t.}\quad
		&\hat g_k(\bm W_{\mathrm B})
		\geq\gamma,\ \forall k\in\mathcal K,
		\label{prob:WB_sinr}\\
		&\eqref{prob:main_pb},\ 
		\eqref{prob:main_pr}.
		\label{prob:WB_power}
	\end{align}
\end{subequations}
Problem \eqref{prob:WB} is a convex quadratically constrained quadratic program (QCQP). Its globally optimal solution can be obtained using standard tools such as CVX. 

\subsection{Relay Precoding Matrix Update}
For fixed $\bm\alpha$, $\bm F$, and $\bm W_{\mathrm B}$, $\bm W_{\mathrm R}$ is embedded in several matrix products in the transformed SINR and relay power constraints, so the structure of the resulting subproblem is not immediately apparent. We therefore introduce the vectorized variable $\bm w_{\mathrm R}\triangleq\operatorname{vec}(\bm W_{\mathrm R})\in\C^{N^2}$. Applying the identity $\operatorname{vec}(\bm A\bm B\bm C)=(\bm C^T\otimes\bm A)\operatorname{vec}(\bm B)$ with $\bm A=\bm h_{2,k}^{H}$, $\bm B=\bm W_{\mathrm R}$, and $\bm C=\bm H_1\bm w_{\mathrm B,j}$ gives $\bm h_{2,k}^{H}\bm W_{\mathrm R}\bm H_1\bm w_{\mathrm B,j}=\bm v_{k,j}^{H}\bm w_{\mathrm R}$, where $\bm v_{k,j}^{H}\triangleq\left( \bm H_1\bm w_{\mathrm B,j}\right)^T\otimes\bm h_{2,k}^{H}$. Similarly, setting $\bm C=\bm I_N$ yields $\operatorname{vec}(\bm h_{2,k}^{H}\bm W_{\mathrm R}) =\bm U_k^{H}\bm w_{\mathrm R}$, where $\bm U_k^{H}\triangleq\bm I_N\otimes\bm h_{2,k}^{H}$. Taking the squared Euclidean norm on both sides gives  $\left\|\bm h_{2,k}^{H}\bm W_{\mathrm R}\right\|^2 =\left\| \bm U_k^{H}\bm w_{\mathrm R}\right\|^2 =\bm w_{\mathrm R}^{H}\bm U_k\bm U_k^{H}\bm w_{\mathrm R}$. Substituting the above relations into $\hat g_k$ gives the following equivalent form: \looseness=-1
\begin{align}
	\hat g_k(\bm w_{\mathrm R})
	={}&2\operatorname{Re}\!\left\{
	\alpha_k^*\bm v_{k,k}^{H}\bm w_{\mathrm R}
	\right\}
	-|\alpha_k|^2\sum_{j\ne k}
	\left|\bm v_{k,j}^{H}\bm w_{\mathrm R}\right|^2
	\nonumber\\
	&-|\alpha_k|^2\sigma_{\mathrm R}^2
	\bm w_{\mathrm R}^{H}\bm U_k\bm U_k^{H}\bm w_{\mathrm R}
	-|\alpha_k|^2\sigma_{\mathrm U,k}^2,
	\label{eq:gk_WR}
\end{align}
which is a concave quadratic function of $\bm w_{\mathrm R}$. Moreover, the relay transmit power can be written as $\bm w_{\mathrm R}^{H}(\bm R_{\mathrm{in}}^T\otimes\bm I_N)\bm w_{\mathrm R}$. Combining the above expressions yields the following subproblem:
\begin{subequations}\label{prob:WR}
	\begin{align}
		\max_{\boldsymbol w_{\mathrm R},\gamma}\quad
		&\gamma
		\label{prob:WR_obj}\\
		\text{s.t.}\quad
		&\hat g_k(\bm w_{\mathrm R})
		\geq\gamma,\quad \forall k\in\mathcal K,
		\label{prob:WR_sinr}\\
		&\bm w_{\mathrm R}^{H}
		(\bm R_{\mathrm{in}}^T\otimes\bm I_N)
		\bm w_{\mathrm R}
		\leq P_{\mathrm R,\max},
		\label{prob:WR_power}
	\end{align}
\end{subequations}
which is a convex QCQP and can be optimally solved via CVX. The updated $\bm W_{\mathrm R}$ is obtained by reshaping $\bm w_{\mathrm R}$. 

\subsection{Overall Algorithm}

Starting from a feasible initialization, the proposed AO-MFW algorithm alternately updates $\bm F$, $\bm W_{\mathrm B}$, and $\bm W_{\mathrm R}$ using the methods described above. Before each block update, $\bm\alpha$ is refreshed using the latest design variables. By construction, every block update preserves feasibility and does not decrease the objective value. The finite power budgets further impose an upper bound on the achievable minimum SINR. Hence, the objective sequence generated by the AO iterations is guaranteed to converge.

The computational burden of each AO iteration is dominated by the three block updates. For fixed numbers of scattering paths, the RA pointing update requires $\mathcal O(I_{\mathrm{MFW}}C_{\mathrm F})$ operations, where $C_{\mathrm F}\triangleq I_{\mathrm A}KN(M_{\mathrm B}+N+K)$. For the remaining two variable blocks, solving the QCQPs \eqref{prob:WB} and \eqref{prob:WR} using an interior-point method with solution accuracy $\epsilon$ incurs $\mathcal O(C_{\mathrm B}\ln(1/\epsilon))$ and $\mathcal O(C_{\mathrm R}\ln(1/\epsilon))$, respectively, where $C_{\mathrm B}\triangleq(KM_{\mathrm B}+1)^2(KM_{\mathrm B}+K+3)^{1.5}$ and $C_{\mathrm R}\triangleq(N^2+1)^2(N^2+K+2)^{1.5}$ \cite{2014_K.wang_complexity}. Therefore, the per-AO-iteration complexity is about $\mathcal O\!\left(I_{\mathrm{MFW}}C_{\mathrm F}+(C_{\mathrm B}+C_{\mathrm R})\ln(1/\epsilon)\right)$.

\begin{remark}
	\label{rem:direct_multi}
	\rm When the direct BS--user links are non-negligible, each user can use interference-rejection combining (IRC) to combine the signals received over the direct and relay links. The resulting joint design problem can still be reformulated via the quadratic transform and addressed within a modified AO-MFW framework. Specifically, the desired-signal and interference terms at the IRC output are used to derive revised RA gradients for the MFW update of $\bm F$ and to reformulate the subproblems for updating $\bm W_{\mathrm B}$ and $\bm W_{\mathrm R}$. 
\end{remark}

\section{Simulation Results}\label{sec:simulation}

This section evaluates the proposed algorithms through numerical simulations. The carrier frequency is set to $3.5$ GHz, corresponding to a wavelength of $\lambda=0.0857$ m. Unless otherwise specified, the RA array consists of $N_x\times N_y=4\times4$ elements with an inter-element spacing of $\lambda/2$, and the maximum
allowable zenith angle is set to $\theta_{\max}=\pi/3$. 

\subsection{Single-User Relay System}

We first consider a single-user relay system, where both the BS and the user are equipped with a single antenna. For performance comparison, the following schemes are considered: (i) \textbf{Proposed MFW:} the method proposed in Section~\ref{sec:single_user}. (ii) \textbf{Common orientation:} a constrained version of the proposed MFW scheme where all RAs share the same orientation, i.e., $\bm f_1=\cdots=\bm f_N$. The $N$ pointing vectors therefore reduce to a single optimization variable. (iii) \textbf{Half split:} a heuristic scheme where the RAs are equally divided into two groups, with one group oriented toward the BS and the other toward the user. (iv) \textbf{Random orientation:} a benchmark scheme with randomly generated RA orientations, whose performance is further averaged over 100 independent orientation realizations for each channel and geometry realization. (v) \textbf{Fixed orientation:} a benchmark scheme where all RAs are fixed to point toward the positive $z$-axis. (vi) \textbf{Isotropic antenna:} a conventional benchmark in which every antenna has unit gain in all directions. All reported results are averaged over 300 independent channel and geometry realizations.

\begin{figure}[!t]
	\vspace{-3mm}
	\centering
	\includegraphics[scale=0.6]{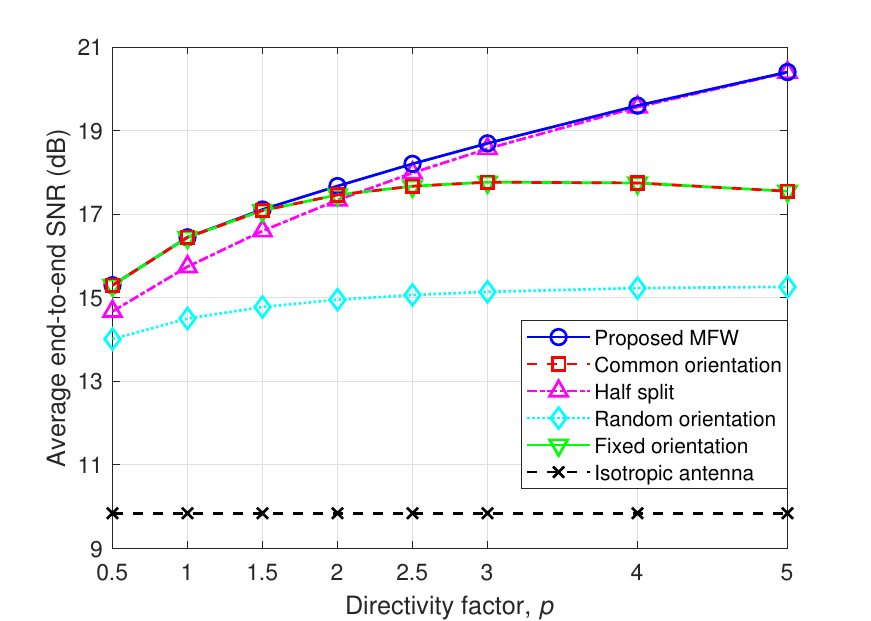}
	\caption{Average end-to-end SNR versus directivity factor under a symmetric far-field LoS geometry.} 
	\label{fig:SNR_vs_p_LoS}
\end{figure}

\subsubsection{Symmetric Far-Field LoS Geometry}
A symmetric far-field LoS geometry is considered, where the BS and user have identical distances uniformly generated from $[100,120]$ m and identical zenith angles uniformly generated from $[\pi/6,\pi/3]$, with an azimuth angle difference of $\pi$. The transmit and noise powers are set as $P_{\rm B,\max}=P_{\rm R,\max}=5$ dBm and $\sigma_{\rm R}^2=\sigma_{\rm U}^2=-80$ dBm, respectively. 

Fig.~\ref{fig:SNR_vs_p_LoS} depicts the average end-to-end SNR versus the directivity factor $p$ under this geometry. It is first observed that the proposed MFW algorithm consistently achieves the best performance, with its gain over the random orientation and isotropic antenna schemes expanding as $p$ increases. More importantly, the common orientation scheme matches the proposed MFW algorithm for small $p$, whereas the half split scheme becomes superior and approaches the proposed MFW algorithm as $p$ increases. This trend corroborates Proposition~\ref{prop:splitting}, demonstrating the shift from common pointing to antenna splitting. We also note that the common orientation and fixed orientation schemes yield identical performance, as pointing toward the positive $z$-axis balances the directional gains of the two symmetric hops.

\begin{figure}[!t]
	\centering
	\includegraphics[scale=0.6]{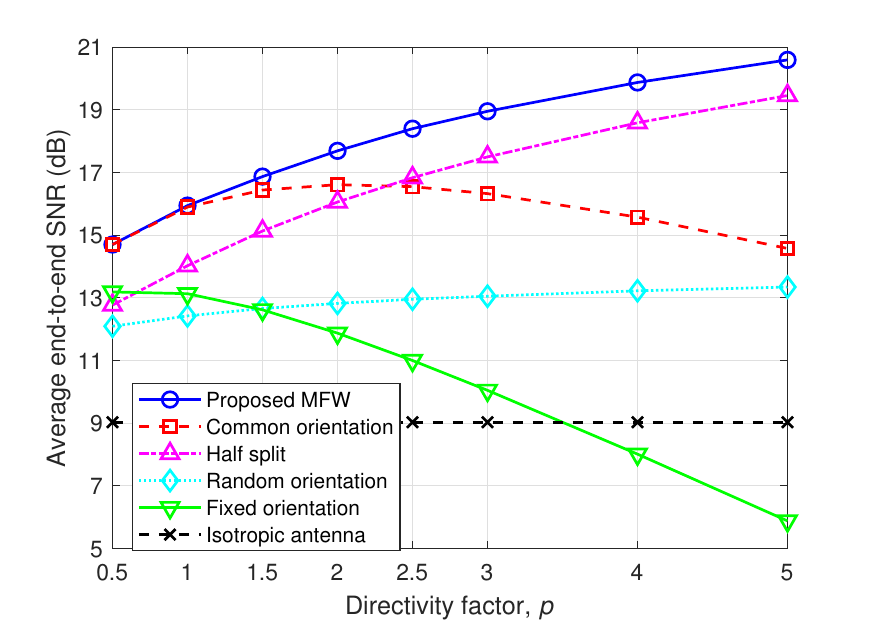}
	\caption{Average end-to-end SNR versus directivity factor under a general asymmetric channel geometry.} 
	\label{fig:SNR_vs_p_NLoS}
	\vspace{-2mm}
\end{figure}

\begin{figure}[!t]
	\centering
	\includegraphics[scale=0.6]{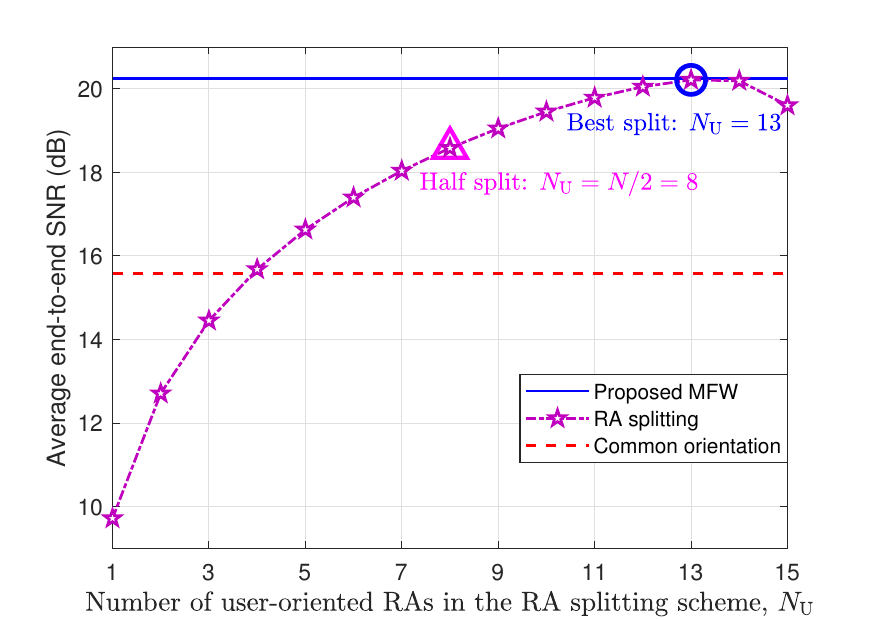}
	\caption{Average end-to-end SNR versus number of user-oriented RAs in the RA splitting scheme.} 
	\label{fig:SNR_vs_Nu}
	\vspace{-2mm}
\end{figure}

\subsubsection{Asymmetric Multipath Channel Geometry}
An asymmetric multipath channel geometry is further considered. The BS--relay and relay--user distances are uniformly generated from $[90,110]$ m and $[150,190]$ m, respectively. Their zenith angles are uniformly generated from $[5\pi/36,\pi/4]$ and $[2\pi/9,\pi/3]$, while the corresponding azimuth angles are generated from $[-5\pi/18,\pi/9]$ and $[11\pi/18,19\pi/18]$, respectively. We set $Q_1=Q_2=5$, with the scatterers randomly distributed around the midpoint of the corresponding link. The transmit powers are set as $P_{\rm B,\max}=P_{\rm R,\max}=5$ dBm, while $\sigma_{\rm R}^2=-90$ dBm and $\sigma_{\rm U}^2=-80$ dBm.

Fig.~\ref{fig:SNR_vs_p_NLoS} shows the average end-to-end SNR versus the directivity factor $p$ under the general asymmetric channel geometry. Several observations can be made. First, the proposed MFW algorithm maintains its superiority over all benchmark schemes across the entire range of $p$. Second, the fixed orientation scheme suffers a significant performance degradation as $p$ increases and even falls below the isotropic antenna scheme for $p\geq 4$. This is because the narrower beamwidth at higher directivity exacerbates the gain loss caused by directional misalignment. Third, the common orientation scheme initially improves and then degrades with $p$, since a single shared orientation becomes increasingly restrictive in accommodating the different propagation directions of the two hops as the beam narrows. Finally, the half split scheme improves monotonically with $p$ and becomes the best benchmark at large $p$, but still remains below the proposed MFW algorithm because its fixed equal partition and prescribed pointing directions cannot fully adapt to the asymmetric multipath channels.

To further examine the antenna-splitting behavior in asymmetric multipath channels, we consider a generalized \textbf{RA splitting} scheme, where $N_{\rm U}$ RAs are oriented toward the user and the remaining $N-N_{\rm U}$ RAs toward the BS. Fig.~\ref{fig:SNR_vs_Nu} shows the average end-to-end SNR versus $N_{\rm U}$ when $p = 4$. We observe that the RA splitting scheme achieves its peak SNR at $N_{\rm U}=13$, closely matching the proposed MFW algorithm, whereas the half split scheme ($N_{\rm U}=N/2=8$) suffers a noticeable performance gap. This indicates that large $p$ still favors antenna splitting under asymmetric multipath channels, while the preferred partition becomes highly unbalanced, extending the insight of Proposition~\ref{prop:splitting} beyond the symmetric LoS case.

\begin{figure}[!t]
	\centering
	\subfigure[]{\label{fig:SINR_vs_p}
		\includegraphics[scale=0.6]{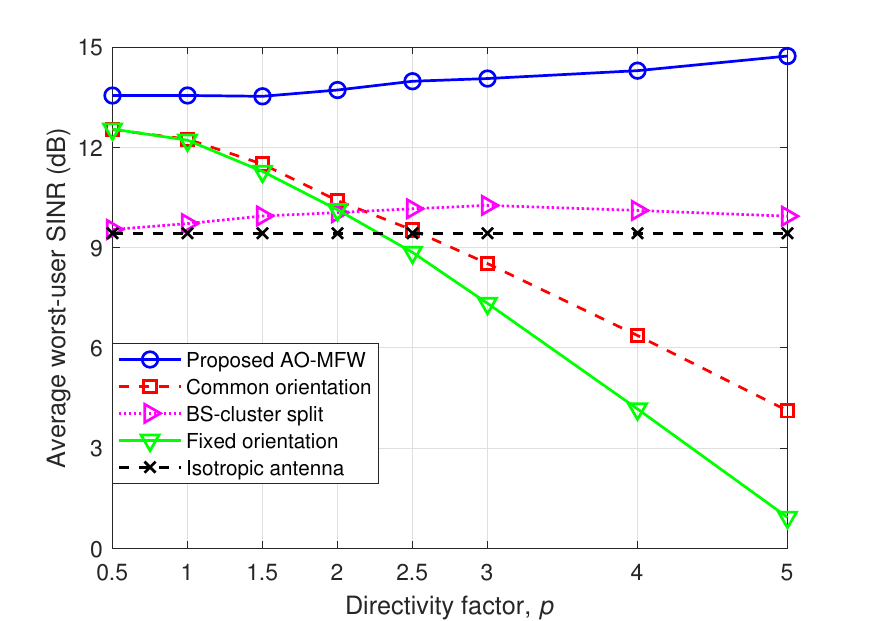}}\vspace{-2mm}
	\subfigure[]{\label{fig:orientation_stats_vs_p}
		\includegraphics[scale=0.6]{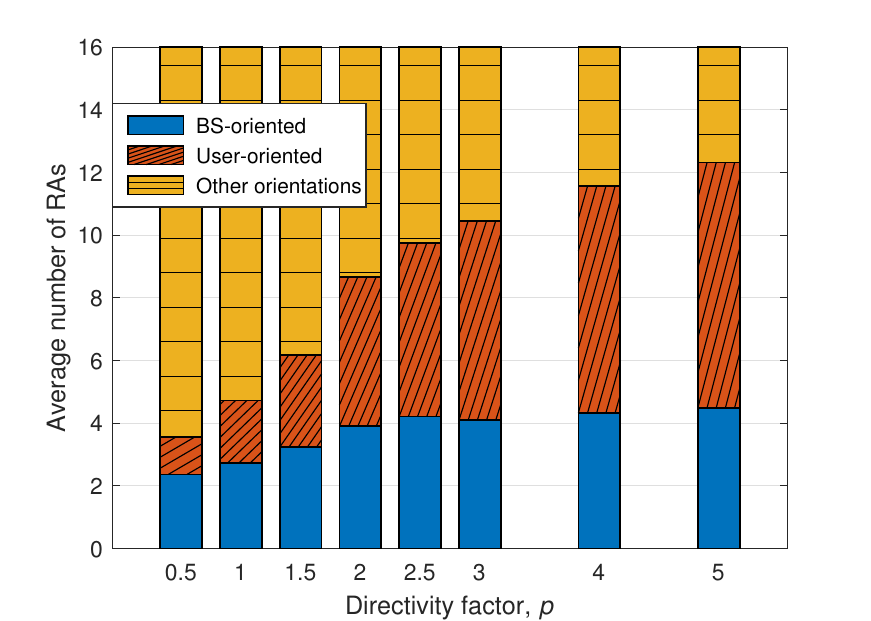}}
	\caption{Impact of the directivity factor $p$ in the multiuser scenario: (a) average worst-user SINR achieved by different schemes; (b) average numbers of BS-oriented, user-oriented, and other RAs under the proposed algorithm.}
	\label{fig:SINR_orientation_stats_vs_p}
	\vspace{-1mm}
\end{figure}

\subsection{Multiuser Relay System}

We next evaluate the proposed design for the multiuser relay system, where the BS is equipped with a $4\times4$ UPA with an inter-element spacing of $\lambda/2$. The relay and BS are located at $[0,0,0]^T$ m and $[-40,0,40]^T$ m, respectively. The horizontal coordinates of the users are randomly distributed within a circular area of radius $25$ m centered at $[50,0]^T$ m, while their $z$-coordinates are uniformly generated from $[25,40]$ m. Moreover, we set $Q_1=Q_2=5$, $P_{\rm B,\max}=P_{\rm R,\max}=15$ dBm, and $\sigma_{\rm R}^2=\sigma_{\mathrm U,k}^2=-80$ dBm, $\forall k \in\mathcal K$.

Compared with the single-user case, the proposed MFW algorithm is replaced by the \textbf{proposed AO-MFW} algorithm introduced in Section~\ref{sec:multiuser}. Moreover, to accommodate the multiple user directions, we replace the half split scheme with a \textbf{BS-cluster split} scheme, where $N_{\rm B}=\mathrm{round}(N/(K+1))$ RAs are oriented toward the BS and the remaining RAs toward the user-cluster center. Apart from these modifications, schemes with the same names as in the previous subsection retain their previous definitions.

\subsubsection{Impact of Directivity Factor} Fig.~\ref{fig:SINR_orientation_stats_vs_p} investigates the impact of the directivity factor $p$ for $K=3$. As shown in Fig.~\ref{fig:SINR_vs_p}, the proposed AO-MFW algorithm consistently achieves the highest worst-user SINR, which steadily improves as $p$ increases. In contrast, both the common orientation and fixed orientation schemes deteriorate markedly at large $p$ due to the severe directional mismatch caused by narrower beams. The BS-cluster split scheme initially benefits from higher directivity, but slightly degrades at large $p$ because pointing toward the cluster center fails to cover individual user and multipath directions as the beams narrow. Interestingly, the isotropic antenna scheme remains robust and outperforms both the common orientation and fixed orientation schemes at large $p$, owing to its direction-independent gain that avoids beam misalignment. 

To further reveal the orientation structure of the proposed algorithm, Fig.~\ref{fig:orientation_stats_vs_p} shows the average numbers of BS-oriented, user-oriented, and other RAs. Specifically, for the $n$-th RA, let $\psi_{{\rm B},n}$ and $\psi_{{\rm U},n}$ denote its angular offsets from the BS direction and the closest user direction, respectively. With $\delta=\pi/6$, it is classified as BS-oriented if $\psi_{{\rm B},n}\leq\delta$ and $\psi_{{\rm U},n}-\psi_{{\rm B},n}\geq\delta$, user-oriented analogously, and otherwise as having other orientations. As $p$ increases, the number of RAs with other orientations decreases, while more RAs align with either the BS or users, revealing a clearer antenna-splitting structure at higher directivity. The predominance of user-oriented RAs at high $p$ reflects the need to accommodate multiple distributed user directions in the second hop, whereas the first hop only involves a single BS direction.

\begin{figure}[!t]
	\centering
	\includegraphics[scale=0.6]{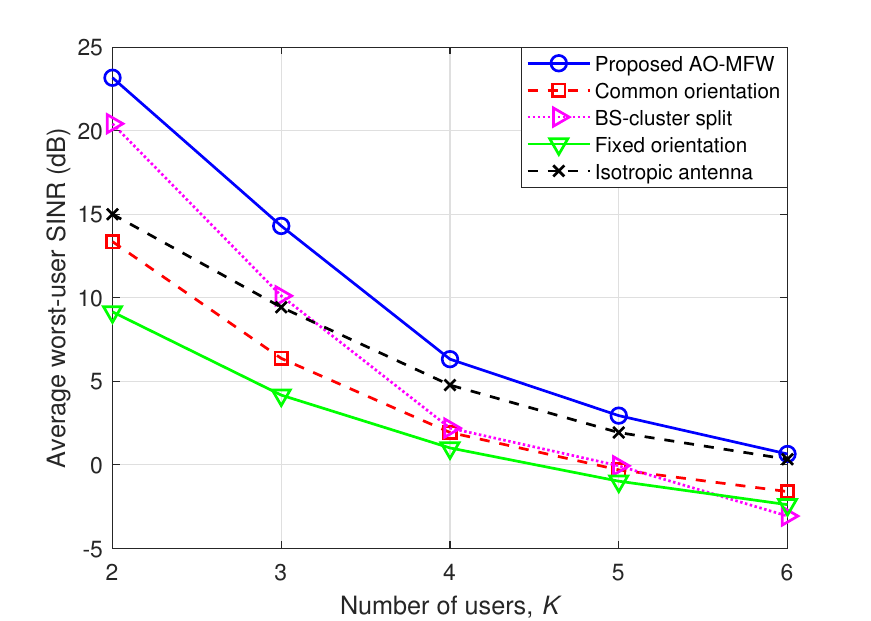}
	\caption{Average worst-user SINR versus number of users.} 
	\label{fig:SINR_vs_K}
\end{figure}

\begin{table}[t]
	\centering
	\caption{Performance comparison for different system configurations under high load ($K=6$). }
	\label{tab:high_load}
	\begin{tabular}{l c c c c c c}
		\hline
		\headcellleft{Configuration}
		& \headcell{$N$}
		& \headcell{$p$}
		& \headcell{$\theta_{\max}$}
		& \headcell{Prop.\\(dB)}
		& \headcell{Iso.\\(dB)}
		& \headcell{Prop.$-$Iso.\\(dB)}
		\\    
		\hline\noalign{\vskip 1.5pt}
		Default setting
		& 16 & 4 & $\pi/3$
		& 0.646 & 0.334 & 0.312 \\
		
		Larger $N$
		& 25 & 4 & $\pi/3$
		& 2.865 & 1.911 & 0.954 \\
		
		Larger $N$
		& 36 & 4 & $\pi/3$
		& 4.950 & 3.227 & 1.723 \\
		
		Smaller $p$
		& 16 & 3 & $\pi/3$
		& 1.120 & 0.334 & 0.786 \\
		
		Smaller $p$
		& 16 & 2 & $\pi/3$
		& 1.327 & 0.334 & 0.993 \\
		
		Larger $\theta_{\max}$
		& 16 & 4 & $5\pi/12$
		& 0.881 & 0.334 & 0.547 \\
		
		Larger $\theta_{\max}$
		& 16 & 4 & $\pi/2$
		& 0.999 & 0.334 & 0.665 \\
		\hline
	\end{tabular}
\end{table}

\subsubsection{Impact of Number of Users and Performance Under High Load}
Fig.~\ref{fig:SINR_vs_K} shows the average worst-user SINR versus the number of users $K$ with $p=4$. As expected, the SINR of all schemes decreases as $K$ increases, due to the more stringent multiuser interference management and fairness requirements. Nevertheless, the proposed AO-MFW algorithm consistently achieves the best performance, demonstrating its ability to adapt the individual RA orientations to increasingly diverse user directions. Among the benchmark schemes, the BS-cluster split scheme performs well for small $K$ but degrades rapidly as $K$ increases, since its fixed partition leaves fewer RAs oriented toward the BS while cluster-center pointing becomes less effective in accommodating the growing number of user directions. Interestingly, the isotropic antenna scheme becomes the best benchmark at relatively large $K$, as its direction-independent gain avoids the increasing orientation mismatch suffered by the constrained directional schemes. 

Motivated by the high-load case in Fig.~\ref{fig:SINR_vs_K}, Table~\ref{tab:high_load} further compares the proposed AO-MFW (Prop.) and isotropic antenna (Iso.) schemes at $K=6$ under different settings of $N$, $p$, and $\theta_{\max}$. As can be observed, increasing $N$ or $\theta_{\max}$ provides greater spatial or orientation flexibility, thereby improving the worst-user SINR and enlarging the gain over isotropic antennas to $1.723$ dB at $N=36$. More notably, the effect of $p$ differs from that observed at $K=3$: decreasing $p$ from $4$ to $2$ at $K=6$ improves the worst-user SINR of the proposed algorithm from $0.646$ dB to $1.327$ dB and increases its gain over isotropic antennas to $0.993$ dB. This is because the broader antenna pattern at smaller $p$ can better accommodate multiple user directions under high load, despite the reduced peak directional gain. Overall, Table~\ref{tab:high_load} shows that the performance advantage of the proposed design under high load can be considerably enhanced by increasing the array size, using a smaller directivity factor, or allowing a wider rotation range.

\begin{figure}[!t]
	\centering
	\includegraphics[scale=0.6]{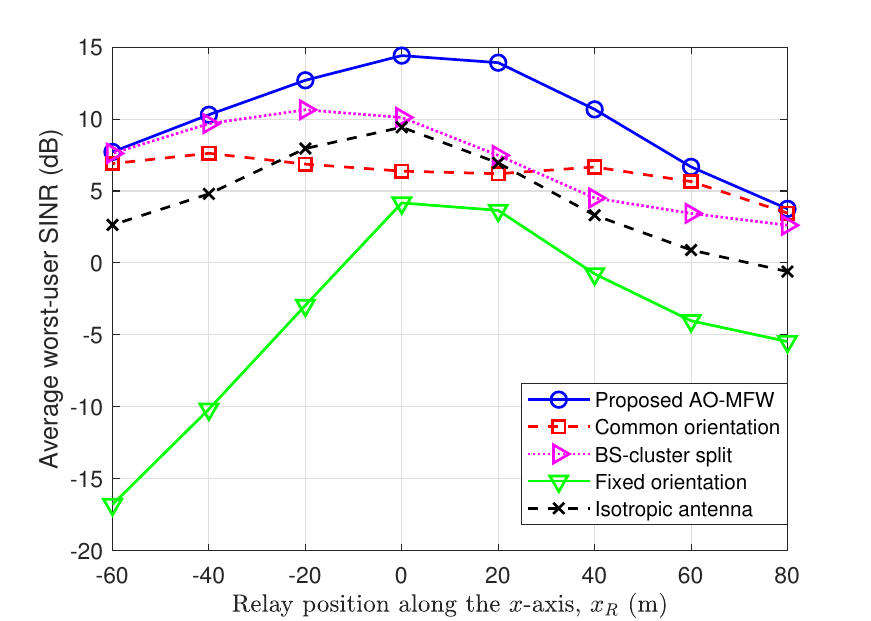}
	\caption{Average worst-user SINR versus relay position along the $x$-axis.} 
	\label{fig:SINR_vs_xR}
	\vspace{-1mm}
\end{figure}

\subsubsection{Impact of Relay Position}

Finally, we investigate the impact of relay position, which jointly affects the two-hop propagation distances and RA pointing directions. Specifically, we set $K=3$ and $p=4$, and move the relay along the $x$-axis as $[x_{\rm R},0,0]^T$~m with $x_{\rm R} \in [-60,80]$~m. As shown in Fig.~\ref{fig:SINR_vs_xR}, the proposed AO-MFW algorithm consistently achieves the highest worst-user SINR. Both the proposed algorithm and the isotropic antenna scheme perform best at intermediate relay positions, where the two-hop path losses are relatively balanced, with the proposed design maintaining a clear lead through orientation adaptation. At the two ends of the considered range, the performance gaps among the proposed, common orientation, and BS-cluster split schemes narrow because one hop becomes the dominant bottleneck and the BS and users appear in more similar directions. Specifically, as the relay approaches the BS, the relay--user hop becomes the dominant bottleneck, making the BS-cluster split scheme more favorable; conversely, as the relay approaches the users, the BS--relay hop dominates and the common orientation scheme becomes more favorable. In contrast, the fixed orientation scheme degrades sharply away from the intermediate region due to severe directional misalignment. Overall, relay placement with relatively balanced two-hop propagation conditions is generally preferable.

\section{Conclusion}\label{sec:conclusion}

In this paper, we investigated an RA-enhanced half-duplex AF relaying system and jointly optimized the BS precoding, relay precoding, and RA pointing matrices to maximize the minimum SINR among the users. To address the resulting non-convex problem, we first applied an MFW method to the single-user scenario and derived a critical directivity threshold separating common pointing from antenna splitting under a symmetric far-field LoS geometry. Building on this MFW procedure, we applied the quadratic transform to the general multiuser problem and solved the resulting auxiliary-variable formulation via AO with a safeguarded MFW update. Numerical results verified the derived threshold, showed that the proposed algorithms achieve the best performance among all considered schemes, and revealed adaptive RA pointing structures for balancing the two hops, the advantage of a moderate directivity factor under heavy user loads, and the benefit of relay placement with relatively balanced two-hop propagation conditions. \looseness=-1

\appendix[Proof of Proposition~\ref{prop:splitting}]\label{app:proof_splitting}
Without loss of generality, take the common propagation plane to be the $x$--$z$ plane. An RA pointing at angle $\theta$ has pointing vector $\bm f(\theta)=[\sin\theta,0,\cos\theta]^T$, while the first- and second-hop propagation directions are $\bm u_1=[\sin\varphi,0,\cos\varphi]^T$ and $\bm u_2=[-\sin\varphi,0,\cos\varphi]^T$, respectively. Hence, $\bm f^T(\theta)\bm u_1=\cos(\theta-\varphi)$ and $\bm f^T(\theta)\bm u_2=\cos(\theta+\varphi)$. According to \eqref{eq:gain_pattern}, the corresponding orientation-dependent power-gain factors are $a_p(\theta)\triangleq\left[ \cos(\theta-\varphi)\right] _+^{2p}$ and $b_p(\theta)\triangleq\left[ \cos(\theta+\varphi)\right]_+^{2p}$, respectively. Define their aggregate values as $A(\bm F)\triangleq\sum_{n=1}^{N}a_p(\theta_n)$ and $B(\bm F)\triangleq\sum_{n=1}^{N}b_p(\theta_n)$. The LoS channel model then gives $y_1(\bm F)=\eta A(\bm F)$ and $y_2(\bm F)=\eta B(\bm F)$, where $\eta\triangleq\beta_0G_0r_{\mathrm c}^{-2}$ and $r_{\mathrm c}$ denotes the common BS--relay and relay--user distance.

Let $\widetilde{\Gamma}\triangleq\eta\Gamma_{\mathrm B}=\eta\Gamma_{\mathrm R}$ and $S(\bm F)\triangleq A(\bm F)+B(\bm F)$. Substituting $y_1(\bm F)=\eta A(\bm F)$ and $y_2(\bm F)=\eta B(\bm F)$ into \eqref{eq:single_snr} and applying the arithmetic--geometric mean inequality yield
\begin{equation}
	\gamma_{\mathrm{su}}^\star(\bm F)
	=
	\frac{\widetilde{\Gamma}^2A(\bm F)B(\bm F)}
	{\widetilde{\Gamma}S(\bm F)+1}
	\leq
	\frac{\widetilde{\Gamma}^2S^2(\bm F)}
	{4\left(\widetilde{\Gamma}S(\bm F)+1\right)}.
	\label{eq:symmetric_snr_bound}
\end{equation}
Equality holds if and only if $A(\bm F)=B(\bm F)$. The derivative of the upper bound with respect to $S$ is $\widetilde{\Gamma}^2S(\widetilde{\Gamma}S+2)/[4(\widetilde{\Gamma}S+1)^2]>0$ for $S>0$. Consequently, any feasible pointing configuration that maximizes $S(\bm F)$ and satisfies $A(\bm F)=B(\bm F)$ is globally optimal.

To maximize $S(\bm F)$, define the total directional gain contributed by an RA over the two hops as $g_p(\theta)\triangleq a_p(\theta)+b_p(\theta)$. Accordingly, maximizing $S(\bm F)=\sum_{n=1}^{N}g_p(\theta_n)$ reduces to maximizing $g_p(\theta)$ over $[-\theta_{\max},\theta_{\max}]$. Since $g_p(-\theta)=g_p(\theta)$, only non-negative pointing angles need to be considered. For $\theta\in(\varphi,\theta_{\max}]$, we have $a_p(\theta)=\cos^{2p}(\theta-\varphi)$ and $b_p(\theta)=\left[ \cos(\theta+\varphi)\right]_+^{2p}$. The former is strictly decreasing, whereas the latter is non-increasing and vanishes when $\theta+\varphi\geq\pi/2$. Hence, $g_p(\theta)$ is strictly decreasing over $(\varphi,\theta_{\max}]$, and no maximizer can lie in this interval. It therefore suffices to maximize $g_p(\theta)$ over $[0,\varphi]$. The behavior of $g_p(\theta)$ over this interval depends on the value of $2\varphi$ relative to $\pi/2$. We consider the following two subcases.

\emph{(a) $0<\varphi<\pi/4$:} Both gain terms are active throughout $[0,\varphi]$. Differentiating $g_p(\theta)$ gives
\begin{equation}
	\begin{aligned}
		g_p'(\theta)
		={}&2p\sin(\varphi-\theta)
		\cos^{2p-1}(\varphi-\theta)\\
		&-2p\sin(\varphi+\theta)
		\cos^{2p-1}(\varphi+\theta).
	\end{aligned}
	\label{eq:gp_derivative}
\end{equation}
Setting \eqref{eq:gp_derivative} to zero and rearranging gives the stationary equation in \eqref{eq:optimal_splitting_angle}, namely, $\frac{\sin(\varphi-\theta)}{\sin(\varphi+\theta)} = \left[\frac{\cos(\varphi+\theta)}{\cos(\varphi-\theta)}\right]^{2p-1}$. Taking logarithms on both sides then shows that $g_p'(\theta)=0$ if and only if $2p=T_\varphi(\theta)$, where $T_\varphi(\theta)\triangleq
1+\frac{\ln[\sin(\varphi-\theta)/\sin(\varphi+\theta)]}
{\ln[\cos(\varphi+\theta)/\cos(\varphi-\theta)]}$.
Moreover, the sign of $g_p'(\theta)$ agrees with that of
$2p-T_\varphi(\theta)$. It therefore remains to characterize $T_\varphi(\theta)$. The function $T_\varphi(\theta)$ is continuous over $(0,\varphi)$ and, by direct differentiation, strictly increasing, with $\lim_{\theta\to0^+}T_\varphi(\theta)=1/\sin^2\varphi$ and $\lim_{\theta\to\varphi^-}T_\varphi(\theta)=\infty$. Hence, if $p\leq1/(2\sin^2\varphi)$, then $g_p'(\theta)<0$ over $(0,\varphi)$, and the maximizer is $\theta=0$. Otherwise, $g_p'(\theta)$ changes sign exactly once from positive to negative, yielding a unique maximizer $\theta^\star\in(0,\varphi)$. Since $T_\varphi(\theta^\star)=2p$ and $T_\varphi(\theta)$ continuously and strictly increases from $1/\sin^2\varphi$ to infinity as $\theta$ ranges from $0$ to $\varphi$, $\theta^\star$ continuously increases from zero toward $\varphi$ as $p$ increases from $p_{\mathrm{th}}(\varphi)$ to infinity.

\emph{(b) $\pi/4\leq\varphi<\pi/2$:} For $\pi/4<\varphi<\pi/2$, both gain terms are active over
$\theta\in(0,\pi/2-\varphi)$, and the same $T_\varphi(\theta)$ is strictly decreasing from $1/\sin^2\varphi$ to $1$. Hence, $g_p'(\theta)$ has at most one zero in this interval and, when it exists, changes sign from negative to positive. The corresponding stationary point is therefore a local minimum. Over $[\pi/2-\varphi,\varphi]$, $b_p(\theta)=0$, while $a_p(\theta)=\cos^{2p}(\varphi-\theta)$ increases monotonically with $\theta$. For $\varphi=\pi/4$, we have $T_\varphi(\theta)=2$ over $(0,\varphi)$, so $g_p(\theta)$ is decreasing, constant, or increasing according as $p<1$, $p=1$, or $p>1$. Hence, for $\pi/4\leq\varphi<\pi/2$, a global maximum of $g_p(\theta)$ is attained at either $\theta=0$ or $\theta=\varphi$. Since $g_p(0)=2\cos^{2p}\varphi$ decreases strictly with $p$, whereas $g_p(\varphi)=1$, the two endpoints yield the same value at $p=\ln 2/[-2\ln(\cos\varphi)]$. Hence, $\theta=0$ is optimal at or below this threshold, while $\theta^\star=\varphi$ above it.

We now use the above scalar characterization to establish cases~(i) and~(ii) of Proposition~\ref{prop:splitting}. (i) For $0<p\leq p_{\mathrm{th}}(\varphi)$, $\theta=0$ maximizes $g_p(\theta)$. Hence, setting $\theta_n=0$ for all $n$ maximizes $S(\bm F)$. Since $a_p(0)=b_p(0)$, this configuration also satisfies $A(\bm F)=B(\bm F)$. It therefore attains the upper bound in
\eqref{eq:symmetric_snr_bound} and is globally optimal.  (ii) For $p>p_{\mathrm{th}}(\varphi)$, both $+\theta^\star$ and $-\theta^\star$ maximize $g_p(\theta)$, where $\theta^\star\in(0,\varphi)$ for $0<\varphi<\pi/4$ and $\theta^\star=\varphi$ for $\pi/4\leq\varphi<\pi/2$. Since $a_p(-\theta)=b_p(\theta)$ and $b_p(-\theta)=a_p(\theta)$, assigning $N/2$ RAs to each angle maximizes $S(\bm F)$ and gives $A(\bm F)=B(\bm F)=Ng_p(\theta^\star)/2$. The resulting antenna-splitting configuration therefore attains the upper bound in \eqref{eq:symmetric_snr_bound} and is globally optimal. Together with the asymptotic result established in subcase~(a), the above conclusions complete the proof. 


\bibliographystyle{IEEEtran}
\bibliography{ref}

\end{document}